\newtheorem{lemma}{Lemma}[section]
\newtheorem{definition}[lemma]{Definition}
\newtheorem{Theorem}[lemma]{Theorem}
\newtheorem{example}[lemma]{Example}
\newtheorem{remark}[lemma]{Remark}
\begin{document}
\title{Randomized Bicriteria Approximation Algorithm for Minimum Submodular Cost Partial Multi-Cover Problem }

\author{Yishuo~Shi$^1$, Zhao~Zhang$^2$\thanks{Corresponding Author: Zhao Zhang, hxhzz@sina.com.}, Ding-Zhu~Du$^3$\\
{\small $^1$ College of Mathematics and System Sciences, Xinjiang University}\\
{\small Urumqi, Xinjiang, 830046, China.}\\
{\small $^2$ College of Mathematics Physics and Information Engineering, Zhejiang Normal University}\\
{\small Jinhua, Zhejiang, 321004, China.}\\
{\small Department of Computer Science, University of Texas at Dallas}\\
{\small Richardson, Texas, 75080, USA.}\\
}

\date{}
\maketitle

\begin{abstract}
This paper studies randomized approximation algorithm for a variant of the set cover problem called {\em minimum submodular cost partial multi-cover} (SCPMC).

In a \emph{partial set cover problem}, the goal is to find a minimum cost sub-collection of sets covering at least a required fraction of elements. In a {\em multi-cover problem}, each element $e$ has a covering requirement $r_e$, and the goal is to find a minimum cost sub-collection of sets $\mathcal S'$ which fully covers all elements, where an element $e$ is fully covered by $\mathcal S'$ if $e$ belongs to at least $r_e$ sets of $\mathcal S'$. In a \emph{minimum submodular cost set cover problem} (SCSC), the cost function on sub-collection of sets is submodular and the goal is to find a set cover with the minimum cost.

The SCPMC problem studied in this paper is a combination of the above three problems, in which the cost function on sub-collection of sets is submodular and the goal is to find a minimum cost sub-collection of sets which fully covers at least $q$-percentage of all elements. Previous work shows that such a combination enormously increases the difficulty of studies, even when the cost function is linear.

In this paper, assuming that the maximum covering requirement $r_{\max}=\max_e r_e$ is a constant and the cost function is nonnegative, monotone nondecreasing, and submodular, we give the first randomized bicriteria algorithm for SCPMC the output of which fully covers at least $(q-\varepsilon)$-percentage of all elements and the performance ratio is $O(b/\varepsilon)$ with a high probability, where $b=\max_e\binom{f}{r_{e}}$ and $f$ is the maximum number of sets containing a common element. The algorithm is based on a novel non-linear program. Furthermore, in the case when the covering requirement $r\equiv 1$, a bicriteria $O(f/\varepsilon)$-approximation can be achieved even when monotonicity requirement is dropped off from the cost function.

{\bf Keywords:} partial cover, multi-cover, submodular cover, Lov\'{a}sz extension, randomized algorithm,  approximation algorithm, bicriteria.
\end{abstract}

\section{Introduction}\label{sec1}
{\em Set Cover} is one of the most important combinatorial optimization problems in both the theoretical field and the application field, the goal of which is to find a sub-collection of sets with the minimum cost to cover all elements. There are a lot of variants of the set cover problem. The minimum \emph{partial set cover} problem (PSC) is to find a minimum cost sub-collection of sets to cover at least $q$-percentage of all elements. One motivation of PSC comes from the phenomenon that in a real world, ``satisfying all requirements'' will be too costly or even impossible, because of resource limitation or political policy. Another variant is the minimum \emph{multi-cover} problem (MC), which comes from the requirement of fault tolerance in practice. In MC, each element $e$ has a covering requirement $r_e$, and the goal is to find a minimum cost sub-collection $\mathcal S'$ to fully cover all elements, where element $e$ is fully covered by $\mathcal S'$ if $e$ belongs to at least $r_e$ sets of $\mathcal S'$. Another generalization of set cover is {\em submodular cost set cover} (SCSC), in which the cost function on sub-collection of sets is submodular and the goal is to find a set cover with the minimum cost. Submodular functions have a natural diminishing returns property which finds wide applications in the real world, including economics, game theory, machine learning and computer vision, etc.

In this paper, we consider a problem which is a combination of the above three problems. In the \emph{minimum submodular cost partial multi-cover} problem (SCPMC), each element has a profit as well as a covering requirement, the goal is to find a minimum submodular cost sub-collection of sets such that the profit of fully covered elements is at least a fixed percentage of the total profit.

\subsection{Related Work}

For Set Cover, Hochbaum \cite{Hochbaum} gave an $f$--approximation algorithm based on LP rounding where $f$ is the maximum number of sets containing a common element. Khot and Regev \cite{Khot} showed  that the set cover problem cannot be approximated within $f-\varepsilon$ for any constant $\varepsilon>0$ assuming that unique games conjecture is true. Another classic result on Set Cover is that greedy strategy yields a $\ln\Delta$-approximation \cite{Chvatal,Johnson,Lovasz1}, where $\Delta$ is the maximum cardinality of a set. Dinur and Steurer \cite{Dinur} showed that the set cover problem cannot be approximated to $(1-o(1))\ln n$ unless $P=NP$, where $n$ is the size of ground set.

For MC, Dobson \cite{Dobson} gave an $H_{K}$-approximation algorithm for the {\em minimum multi-set multi-cover problem} (MSMC), where $K$ is the maximum size of a multi-set and $H_K=\sum_{i=1}^K1/i$ is the harmonic number (recall that $H_K\approx \ln K$). Rajagopalan and Vazirani \cite{Rajagopalan} gave a greedy algorithm achieving the same performance ratio, using dual fitting analysis. For the {\em minimum set $k$-cover problem} in which the covering requirement of every element is $k$, Berman \emph{et al.} \cite{Berman} gave a randomized algorithm achieving expected performance ratio at most $\ln(\frac{\Delta}{k})$.

For PSC, Kearns \cite{Kearns} gave the first greedy algorithm achieving performance ratio $(2H_n+3)$. Refining the greedy algorithm, Slavik \cite{Slavik} improved the ratio to $H_{\min \{\lceil qn\rceil,\Delta\}}$, where $q$ is the desired covering ratio. Using primal dual method, Gandhi \emph{et al.} \cite{Gandhi} obtained an $f$-approximation. Bar-Yehuda \cite{Bar-Yehuda} studied a generalized version of the partial cover problem in which each element has a profit. Using local ratio method, he also obtained an $f$-approximation. Proposing an Lagrangian relaxation framework, Konemann \emph{et al.} \cite{Konemann} gave a $(\frac{4}{3}+\varepsilon)H_{\Delta}$-approximation for the generalized partial cover problem.

From the above related work, it can be seen that both PSC and MC admit performance ratios which match those best ratios for the classic set cover problem. However, combining partial cover with multi-cover seems to enormously increase the difficulty of studies. Ran \emph{et al.} \cite{Ran} were the first to study approximation algorithm for the {\em minimum partial multi-cover problem} (PMC). Using greedy strategy and a delicate dual fitting analysis, they gave a $\gamma H_{\Delta}$-approximation algorithm, where $\gamma=1/(1-(1-q)\eta)$, $\eta=\Delta\frac{c_{\max}}{c_{\min}}\frac{r_{\max}}{r_{\min}}$, and $c_{\max}$, $c_{\min}$ are the maximum and the minimum cost of set, $r_{\max}$, $r_{\min}$ are the maximum and the minimum covering requirement of element, respectively. This ratio is meaningful only when the covering percentage $q$ is very close to 1. In \cite{Ran1}, Ran {\it et al.} presented a simple greedy algorithm achieving performance ratio $\Delta$. Recall that in terms of $\Delta$, greedy algorithm for Set Cover achieves performance ratio $\ln \Delta$. So, ratio $\Delta$ for PMC is exponentially larger than the one for Set Cover. In the same paper, they also presented a local ratio algorithm which reveals an interesting ``shock wave'' phenomenon: their performance ratio is $f$ for both PSC (that is, when $r_{\max}=r_{min}=1$ which is the partial {\em single} cover problem) and MC (that is, when $q=1$ which is the {\em full} multi-cover problem); however, when $q$ is smaller than 1 by a very small constant, the ratio jumps abruptly to $O(n)$.

The submodular cost set cover problem was first proposed by Iwata and Nagano \cite{Iwata}. They gave an $f$-approximation algorithm for nonnegative submodular functions. In paper \cite{Koufogiannakis}, Koufogiannakis and Young generalized set cover constraint to arbitrary covering constraints and gave an
$f$-approximation algorithm for monotone nondecreasing nonnegative submodular functions.

In this paper we combine submodular cost function with partial multi-cover constraint. As one can see from previous results on PMC, even when the cost function is linear, the partial multi-cover problem is already very difficult.

\subsection{Our Contribution}


The major contribution of this paper is a randomized $(\varepsilon,O(\frac{b}{\varepsilon}))$-approximation algorithm for SCPMC, that is, the algorithm produces a solution covering at least $(q-\varepsilon)$-percentage of the total covering requirement, and achieves performance ratio $O(\frac{b}{\varepsilon})$ with a high probability, where $b=\max_e\binom{f}{r_{e}}$, and $f$ is the maximum number of sets containing a common element.

Before presenting this algorithm, we show that a natural integer program for SCPMC does not work since its integrality gap is arbitrarily large. Hence, to obtain a good approximation, we propose a novel integer program. The relaxation of the integer program uses Lov\'{a}sz extension \cite{Lovasz}. Our algorithm consists of two stages of rounding. The first stage is a deterministic rounding. The second stage is a random rounding, the analysis of which is based on an equivalent expression of Lov\'{a}sz extension \cite{Chekuri} in view of expectation.

As far as we know, this is the first approximation algorithm for a partial version of the submodular multi-cover problem. Furthermore, we show that for the special case when the covering requirement $r\equiv 1$ (the special case is abbreviated as SCPSC), our method can be adapted to yield an $(\varepsilon,O(f/\varepsilon))$-approximation with high probability, even when monotonicity is dropped off from the requirement of the cost function.

This paper is organized as follows. In Section 2, we introduce formal definitions of problems considered in this paper, as well as some technical results. The bicriteria randomized algorithm for SCPMC is presented and analyzed in Section 3. In Section 4, we show how to adapt our algorithm to deal with SCPSC. The last section concludes the paper and discusses some future work.

\section{Preliminaries}
\begin{definition}[Submodular Cost Partial Multi-Cover (SCPMC)]
{\rm Suppose $E$ is an element set and $\mathcal{S}\subseteq 2^{E}$ is a collection of subsets of $E$ with $\bigcup_{S\in\mathcal S}S=E$; each element $e\in E$ has a positive covering requirement $r_e$ and a positive profit $p_e$; cost function $\rho_0: 2^{\mathcal S}\mapsto\mathbb{R}$ is defined on sub-collections of $\mathcal S$, which is nonnegative, monotone nondecreasing, and submodular. Given a constant $q\in (0,1]$ called {\em covering ratio}, the SCPMC problem is to find a minimum cost sub-collection $\mathcal S'$ such that $\sum_{e\sim\mathcal S'} p(e)\geq qP$, where $P=\sum_{e\in E}p(e)$ is the total profit, $e\sim S'$ means that $e$ is {\em fully covered} by $\mathcal S'$, that is, $|\{S\in\mathcal S'\colon e\in S\}|\geq r_e$. An instance of SCPMC is denoted as $(E,\mathcal S,r,p,q,\rho_0)$.}
\end{definition}

In particular, when $r_{\max}=1$, we call the problem a \emph{submodular cost partial set cover problem} (SCPSC). When the cost function is linear, that is, every set $S\in \mathcal S$ has a cost $c(S)$ and the cost of a sub-collection $\mathcal S'$ is $\rho_0(\mathcal S')=\sum_{S\in\mathcal S'}c(S)$, the problem is exactly the minimum {\em partial multi-cover problem} (PMC).

Submodular function has many equivalent definitions. We only introduce the following one which is convenient to be used in this paper.

\begin{definition}[submodular function]\label{def2.2}
{\rm Given a ground set $E$, a set function $\rho: 2^E\mapsto \mathbb{R}$ is {\em submodular} if for any $E''\subseteq E'\subseteq E$ and $E_0\subseteq E\setminus E'$, we have
\begin{equation}\label{eq11-22-1}
  \rho(E'\cup E_0)-\rho(E')\leq \rho(E''\cup E_0)-\rho(E'').
\end{equation}}
\end{definition}
Notice that a nonnegative submodular function $\rho$ satisfies {\em subadditivity}: for any sets $X,Y\subseteq E$,
\begin{equation}\label{eq12-28-1}
  \rho(X\cup Y)\leq \rho(X)+\rho(Y).
\end{equation}

Notice that a set $S\subseteq E$ can be indicated by its characteristic vector $x_S=(x_1,\ldots,x_n)$, where $n=|E|$, $E=\{e_1,\ldots,e_n\}$, and $x_i=1$ if $e_i\in S$ and $x_i=0$ if $e_i\notin S$. So, in the following, we shall use notation $\{0,1\}^n\mapsto \mathbb R$ to refer to a set function. The relationship between submodularity and convexity can be formulated in terms of Lov\'{a}sz extension.

\begin{definition}[Lov\'{a}sz extension \cite{Lovasz}]\label{def1.3}
{\rm For a set function $\rho: \{0,1\}^n\mapsto \mathbb{R}$, the Lov\'{a}sz extension ${\hat{\rho}}: \mathbb R^n\rightarrow \mathbb{R}$ is defined as follows. For any vector $x\in \mathbb R^n$, order elements as $e_{j_1},e_{j_2},\ldots,e_{j_n}$ such that $x_{j_1}\geq x_{j_2}\geq ...\geq x_{j_n}$, where $x_{j_i}$ is the coordinate of $x$ indexed by $e_{j_i}$.
Let $E_i=\{e_{j_1},e_{j_2},...,e_{j_i}\}$. The value of ${\hat{\rho}}$ at $x$ is
\begin{equation}\label{eq1}
  {\hat{\rho}}(x)=\sum^{n-1}_{i=1}(x_{j_i}-x_{j_{i+1}})\rho(E_i)+x_{j_n}\rho(E_n).
\end{equation}}
\end{definition}

The above definition implies that Lov\'{a}sz extension $\hat{\rho}$ satisfies {\em positive homogenous property}, that is, for any $t>0, \hat{\rho}(tx)=t\hat{\rho}(x)$. The following result reveals the relationship between submodularity and convexity.

\begin{Theorem}\label{th1}
A set function $\rho$ is submodular if and only if its ${\rm Lov\acute{a}sz}$ extension $\hat{\rho}$ is convex.
\end{Theorem}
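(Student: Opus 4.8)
The plan is to prove both directions of the equivalence between submodularity of $\rho$ and convexity of its Lov\'asz extension $\hat\rho$, working directly from Definition~\ref{def1.3}.

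For the forward direction (submodular $\Rightarrow$ convex), I would first record the basic structural facts about $\hat\rho$ that follow immediately from the definition. The domain $\mathbb R^n$ is partitioned into finitely many polyhedral cones, one for each permutation $\pi$ of $\{1,\dots,n\}$, namely $C_\pi=\{x\colon x_{\pi(1)}\ge x_{\pi(2)}\ge\cdots\ge x_{\pi(n)}\}$; on each such cone $\hat\rho$ agrees with a single linear functional $\ell_\pi(x)=\sum_i c^\pi_i x_{\pi(i)}$, where the coefficients are telescoped from the values $\rho(E_i)$. So $\hat\rho$ is piecewise linear and continuous (the linear pieces agree on the shared boundary faces, where the ordering ties). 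A standard lemma says that a continuous piecewise-linear function that is the pointwise maximum of its linear pieces is convex; equivalently, I will show $\hat\rho(x)=\max_{\pi}\ell_\pi(x)$, i.e.\ that on the cone $C_\pi$ the functional $\ell_\pi$ dominates every other $\ell_{\pi'}$. The cleanest route here uses the well-known ``greedy'' characterization: for a fixed $x$ with coordinates sorted decreasingly, $\hat\rho(x)=\max\{\langle w,x\rangle : w(A)\le\rho(A)\ \forall A,\ w(E)=\rho(E)\}$ (the maximizer being the greedy vector $w_i=\rho(E_i)-\rho(E_{i-1})$), and it is exactly submodularity of $\rho$ that guarantees this greedy vector is feasible for the polytope. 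Since a pointwise maximum of linear functions is convex, $\hat\rho$ is convex. I should double-check that submodularity (in the form of Definition~\ref{def2.2}) is precisely what makes the greedy vector satisfy $w(A)\le\rho(A)$ for all $A$; this is a short induction on $|A|$ using inequality~(\ref{eq11-22-1}).

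For the converse (convex $\Rightarrow$ submodular), I would use the fact that $\hat\rho$ extends $\rho$: evaluating (\ref{eq1}) at the characteristic vector $x_A$ of a set $A$ gives $\hat\rho(x_A)=\rho(A)$ (all the jumps $x_{j_i}-x_{j_{i+1}}$ vanish except the single one at the boundary of $A$). Now take any $A,B\subseteq E$ and consider the point $x=\tfrac12(x_A+x_B)$. Its coordinates take only the values $1$ (on $A\cap B$), $\tfrac12$ (on the symmetric difference), and $0$ (outside $A\cup B$), so sorting and plugging into (\ref{eq1}) yields exactly $\hat\rho(x)=\tfrac12\rho(A\cup B)+\tfrac12\rho(A\cap B)$. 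Convexity gives $\hat\rho(x)\le\tfrac12\hat\rho(x_A)+\tfrac12\hat\rho(x_B)=\tfrac12\rho(A)+\tfrac12\rho(B)$, and comparing the two expressions produces the submodular inequality $\rho(A\cup B)+\rho(A\cap B)\le\rho(A)+\rho(B)$, which is equivalent to the form stated in Definition~\ref{def2.2}. (If one prefers the exact shape of (\ref{eq11-22-1}), apply this with the roles chosen so that $A=E''\cup E_0$ and $B=E'$, noting $A\cup B=E'\cup E_0$ and $A\cap B\supseteq E''$ — here a little care with the precise relationship between $E''$, $E'$, $E_0$ is needed, and one uses monotonicity-free manipulation plus the just-derived $\{A\cup B, A\cap B\}$ inequality.)

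The main obstacle is the forward direction: making rigorous the claim that on each ordering cone the corresponding linear functional is the \emph{maximum} of all the linear pieces, which is where submodularity enters. The honest way to do this is to establish the greedy/base-polytope lemma (Edmonds' theorem), and I expect that the bookkeeping in the induction showing the greedy vector lies in $\{w : w(A)\le \rho(A)\}$ — in particular handling arbitrary subsets $A$, not just the prefix sets $E_i$ — is the step that needs the most care. Everything else (continuity across ties, $\hat\rho$ extending $\rho$, the averaging argument for the converse) is routine manipulation of the defining formula (\ref{eq1}).
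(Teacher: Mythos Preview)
The paper does not prove Theorem~\ref{th1}; it is quoted in the preliminaries as a classical background fact (the result is due to Lov\'asz, reference \cite{Lovasz}), so there is no ``paper's own proof'' to compare against.

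That said, your outline is the standard argument and is essentially correct. The converse direction is clean: your averaging trick with $x=\tfrac12(x_A+x_B)$ is exactly how one usually extracts the inequality $\rho(A\cup B)+\rho(A\cap B)\le\rho(A)+\rho(B)$ from convexity, and with your choice $A=E''\cup E_0$, $B=E'$ one actually gets $A\cap B=E''$ on the nose (not merely $\supseteq$), so no extra manipulation is needed to recover the exact shape of~(\ref{eq11-22-1}). One small caveat: when $A\cap B=\emptyset$ the formula $\hat\rho\bigl(\tfrac12(x_A+x_B)\bigr)=\tfrac12\rho(A\cup B)+\tfrac12\rho(A\cap B)$ uses the normalization $\rho(\emptyset)=0$; this is implicit in the paper's Definition~\ref{def1.3} since $\hat\rho(0)=0$, but you should say so. For the forward direction, your plan to invoke Edmonds' greedy theorem and realize $\hat\rho$ as the support function of the base polytope is the textbook route; you have correctly identified the one nontrivial step (showing the greedy vector $w_i=\rho(E_i)-\rho(E_{i-1})$ satisfies $w(A)\le\rho(A)$ for \emph{all} $A$, which is where submodularity is used via an uncrossing/induction argument). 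Nothing is missing conceptually; what remains is just to write out that induction carefully.
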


The following is an equivalent expression of Lov\'{a}sz extension in range $[0,1]^n$.

\begin{Theorem}[\cite{Chekuri}]\label{def1.4}
{\rm Let $\rho$ be a set function $\{0,1\}^n\mapsto \mathbb{R}$. The Lov\'{a}sz extension ${\hat{\rho}}$ of $\rho$ in range $[0,1]^n$ can be equivalently expressed as
\begin{equation}\label{eq2}
  \hat{\rho}(x) =\mathop{\mathbb{E}}\limits_{\theta \in [0,1]}[\rho(x^\theta)]=\int_{0}^{1} \rho(x^{\theta})d\theta,
\end{equation}
where $x^{\theta}_i=1$ if $x_i\geq \theta$, otherwise $x^{\theta}_i=0$.}
\end{Theorem}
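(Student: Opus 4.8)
The plan is to compute the integral $\int_0^1\rho(x^\theta)\,d\theta$ explicitly by observing that, for fixed $x$, the map $\theta\mapsto\rho(x^\theta)$ is a step function of $\theta$. First I would fix $x\in[0,1]^n$ and, exactly as in Definition~\ref{def1.3}, reorder coordinates so that $x_{j_1}\ge x_{j_2}\ge\cdots\ge x_{j_n}$, adopting the conventions $x_{j_0}=1$ and $x_{j_{n+1}}=0$. The key observation is that $x^\theta$ is the characteristic vector of $\{e_i:x_i\ge\theta\}$, and because the values $x_{j_i}$ are sorted, this set equals $E_k=\{e_{j_1},\dots,e_{j_k}\}$ precisely when $\theta\in(x_{j_{k+1}},x_{j_k}]$, for $k=0,1,\dots,n$ (with $E_0=\emptyset$). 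These half-open intervals partition $(0,1]$ except for the finitely many breakpoints $x_{j_k}$, a set of measure zero, so they do not affect the integral.

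Next I would split the integral along this partition and use that $\rho(x^\theta)$ is constant on each piece:
$$\int_0^1\rho(x^\theta)\,d\theta=\sum_{k=0}^{n}\int_{x_{j_{k+1}}}^{x_{j_k}}\rho(E_k)\,d\theta=\sum_{k=0}^{n}\bigl(x_{j_k}-x_{j_{k+1}}\bigr)\rho(E_k).$$
Peeling off the $k=0$ term, equal to $(1-x_{j_1})\rho(\emptyset)$, and the $k=n$ term, equal to $x_{j_n}\rho(E_n)$, the remaining middle sum $\sum_{i=1}^{n-1}(x_{j_i}-x_{j_{i+1}})\rho(E_i)$ together with $x_{j_n}\rho(E_n)$ is exactly $\hat{\rho}(x)$ as defined in~\eqref{eq1}. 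Hence $\int_0^1\rho(x^\theta)\,d\theta=\hat{\rho}(x)+(1-x_{j_1})\rho(\emptyset)$, which equals $\hat{\rho}(x)$ under the standard normalization $\rho(\emptyset)=0$ that is assumed throughout (the empty sub-collection has zero cost). Finally, since $\theta$ is uniform on $[0,1]$, the quantity $\mathop{\mathbb{E}}_{\theta\in[0,1]}[\rho(x^\theta)]$ is by definition this same integral, so all three expressions in~\eqref{eq2} agree.

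There is no serious obstacle here; the only care needed is bookkeeping at the breakpoints. In particular, ties among the $x_{j_i}$ make some intervals $(x_{j_{k+1}},x_{j_k}]$ empty, so that they contribute $0$ to the sum, and one should verify that this is harmless and that the telescoping still attaches the coefficient $x_{j_i}-x_{j_{i+1}}$ to each $\rho(E_i)$. It is worth remarking that submodularity of $\rho$ plays no role in this identity: it is a purely combinatorial statement relating the piecewise-linear interpolation $\hat{\rho}$ to a layer-cake integral, valid for any set function vanishing on $\emptyset$.
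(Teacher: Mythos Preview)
Your argument is correct and is the standard derivation of this identity. Note, however, that the paper does not supply its own proof of this theorem: it is quoted as a known result from \cite{Chekuri}, so there is nothing in the paper to compare your approach against. Your handling of the breakpoints and ties is sound, and you are right to flag the normalization $\rho(\emptyset)=0$; without it the formula in Definition~\ref{def1.3} and the integral differ by exactly the term $(1-x_{j_1})\rho(\emptyset)$ you isolated. The paper never states this normalization explicitly, but it is implicit in the cost interpretation (the empty sub-collection costs nothing) and is the convention under which \cite{Chekuri} proves the identity.
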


In this paper, we study the SCPMC problem under the following assumptions.

{\bf (Assumption 1)} The maximum covering requirement $r_{\max}=\max\{r_e\colon e\in E\}$ has a constant upper bound.

{\bf (Assumption 2)} Since submodular cost (full) multi-cover problem is already studied in \cite{Iwata,Koufogiannakis}, we only consider the partial version, that is, it is assumed that $q<1$.


\section{Approximation Algorithm for SCPMC}
A natural idea to model the SCPMC problem is to use the following integer programm:
\begin{align}\label{eq12-22-1}
\min\ & \rho_0(x)  \nonumber\\
 s.t.\    \sum_{e:\ e\in E}p_ey_e  & \geq qP, \nonumber\\
 \sum_{S:\ e\in S}x_S  & \geq r_{e}y_e, \ \mbox{for any}\ e\in E\\
   x_S \in  \{0,1\} & \ \ \mbox{for}\ S\in\mathcal S\nonumber\\
   y_e \in  \{0,1\} & \ \ \mbox{for}\ e\in E\nonumber
\end{align}
Here $x_S$ indicates whether set $S$ is selected and $y_e$ indicates whether element $e$ is fully covered. The second constraint says that if $y_e=1$ then at least $r_e$ sets containing $e$ must be selected and thus $e$ is fully covered. Relaxing \eqref{eq12-22-1}, we have the following convex program:
\begin{align}\label{eq12-22-2}
\min\ & \hat{\rho}_0(x)  \nonumber\\
 s.t.\    \sum_{e:\ e\in E}p_ey_e  & \geq qP, \nonumber\\
 \sum_{S:\ e\in S}x_S  & \geq r_{e}y_e,\ \mbox{for any}\ e\in E\\
    x_S& \geq 0\ \mbox{for}\ S\in\mathcal S\nonumber\\
  1\geq y_e& \geq 0\ \mbox{for}\ e\in E\nonumber
\end{align}

However, based on such a program, one cannot find a good approximation. The following example shows that the integrality gap between \eqref{eq12-22-1} and \eqref{eq12-22-2} can be arbitrarily large, even when the profit function is a constant and the cost function is linear.

\begin{example}\label{ex1}
{\rm Let $E=\{e_1,e_2\}$, $\mathcal S=\{S_1,S_2,S_3\}$ with $S_1=\{e_1\}$, $S_2=\{e_2\}$, $S_3=\{e_1,e_2\}$, $c(S_1)=c(S_2)=1$, $c(S_3)=M$ where $M$ is a large positive number, $r(e_1)=r(e_2)=2$, $p(e_1)=p(e_2)=1$,  $q=1/2$, and the cost function $\rho_0(x)=\sum_{S\in\mathcal S}c(S)x_S$. Then $x_{S_1}=x_{S_2}=1$, $x_{S_3}=0$, $y_{e_1}=y_{e_2}=1/2$ form a feasible solution to \eqref{eq12-22-2} with objective value 2, while any integral feasible solution to \eqref{eq12-22-1} has cost at least $M+1$.}
\end{example}

Hence, to obtain a good approximation, we need to find another program.

\subsection{Integer Program and Convex Relaxation}
For an element $e$, an {\em $r_e$-cover} is a sub-collection $\mathcal A\subseteq \mathcal S$ with $|\mathcal A|=r_e$ such that $e\in S$ for every $S\in\mathcal A$. Denote by $\Omega_e$ the family of all $r_e$-covers and $\Omega=\bigcup_{e\in E}\Omega_e$. The following example illustrates these concepts.

\begin{example}\label{ex2}
{\rm Let $E=\{e_1,e_2,e_3\}$. $\mathcal S=\{S_1,S_2,S_3\}$ with $S_1=\{e_1,e_2\}$, $S_2=\{e_1,e_2,e_3\}$, $S_3=\{e_2,e_3\}$, $S_4=\{e_1,e_3\}$, and $r(e_1)=2$, $r(e_2)=r(e_3)=1$. For this example, $\Omega_{e_1}=\{\{S_1,S_2\},\{S_1,S_4\},\{S_2,S_4\}\}, \Omega_{e_2}=\{\{S_1\},\{S_2\},\{S_3\}\}$, $\Omega_{e_3}=\{\{S_2\},\{S_3\},\{S_4\}\}$, and $\Omega=\{\{S_1\},\{S_2\},\{S_3\},\{S_4\},\{S_1,S_2\},\{S_1,S_4\},\{S_2,S_4\}\}$.}
\end{example}

Let $\rho$: $2^{\Omega}\rightarrow \mathbb{R}$ be the function on sub-families of $\Omega$ defined by
\begin{equation}\label{eq17-1-3-1}
\rho(\Omega')=\rho_0( \bigcup_{\mathcal A\in \Omega'}\mathcal A)
\end{equation}
for $\Omega'\subseteq \Omega$. For example, $\rho(\{\{S_1\},\{S_1,S_2\}\})=\rho_0(\{S_1,S_2\})$. The SCPMC problem can be modeled as an integer program as follows.
\begin{align}\label{eq11-22-2}
\min\ \rho(x)  \nonumber\\
 s.t.\    \sum_{e:\ e\in E}p_ey_e  & \geq qP, \nonumber\\
 \sum_{\mathcal A:\ \mathcal A\in \Omega_{e}}x_{\mathcal A}  & \geq y_e, \ \mbox{for any}\ e\in E\\
   x_{\mathcal A} \in  \{0,1\} & \ \ \mbox{for}\ \mathcal A\in\Omega\nonumber\\
   y_e \in  \{0,1\} & \ \ \mbox{for}\ e\in E\nonumber
\end{align}
Here, $x_{\mathcal A}$ indicates whether cover $\mathcal A$ is selected and $y_e$ indicates whether element $e$ is fully covered. The second constraint says that if $y_e=1$, then at least one $r_e$-cover must be selected and thus $e$ is fully covered.

\begin{example}\label{ex3}
{\rm For the example in Example \ref{ex2}, suppose $p_{e_i}\equiv 1$ for $i=1,2,3$ and $q=2/3$. Consider a feasible solution to \eqref{eq11-22-2}: $x_{\mathcal A_1}=x_{\mathcal A_2}=1$ for $\mathcal A_1=\{S_1,S_2\}$, $\mathcal A_2=\{S_2\}$, and $x_{\mathcal A}=0$ for all other $\mathcal A\in\Omega\setminus \{\mathcal A_1,\mathcal A_2\}$, we have $y_{e_1}=y_{e_2}=1$ and $y_{e_3}=0$. This feasible solution to \eqref{eq11-22-2} has objective value $\rho(\{\mathcal A_1,\mathcal A_2\})=\rho_0(S_1,S_2)$, which corresponds to a feasible solution $\{S_1,S_2\}$ to SCPMC with the same cost. Conversely, for the feasible solution $\{S_1,S_2\}$ to SCPMC, it is natural to set $x_{\mathcal A_1}=1$ and all other $x_{\mathcal A}$ to be zeros. However, this is not a feasible solution to \eqref{eq11-22-2}. Nevertheless, one can construct a feasible solution to \eqref{eq11-22-2} having the same cost by setting $x_{\mathcal A_1}=x_{\mathcal A_2}=1$ and all other $x_{\mathcal A}$ to be zeros.}
\end{example}

In general, for a feasible solution $\mathcal S'$ to SCPMC, one can construct a feasible solution to \eqref{eq11-22-2} as follows: for each element $e$ which is fully covered by $\mathcal S'$, let $y_e=1$ and let $x_{\mathcal A_e}=1$ for exactly one $r_e$-cover $\mathcal A_e$ which contains $r_e$ subsets of $\mathcal S'$ (such $\mathcal A_e$ exists since $e$ is fully covered by $\mathcal S'$); all other variables are set to be zeros. Such a construction clearly results in a feasible solution to \eqref{eq11-22-2} whose objective value is at most $\rho_0(\mathcal S')$ (by the monotonicity of $\rho_0$). So, \eqref{eq11-22-2} is indeed a characterization of the SCPMC problem.

The following lemma shows that function $\rho$ is nonnegative, monotone nondecreasing, and submodular.

\begin{lemma}\label{lem1-15-1}
If $\rho_0$ is nonnegative, monotone nondecreasing, and submodular, then the function $\rho$ defined in \eqref{eq17-1-3-1} is also nonnegative, monotone nondecreasing, and submodular.
\end{lemma}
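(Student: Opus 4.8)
The plan is to verify each of the three properties --- nonnegativity, monotonicity, and submodularity --- in turn, transferring them from $\rho_0$ to $\rho$ via the identity $\rho(\Omega')=\rho_0\bigl(\bigcup_{\mathcal A\in\Omega'}\mathcal A\bigr)$. Write $U(\Omega')=\bigcup_{\mathcal A\in\Omega'}\mathcal A\subseteq\mathcal S$ for the union operator, so that $\rho=\rho_0\circ U$. Two elementary facts about $U$ will be used repeatedly: it is monotone (if $\Omega''\subseteq\Omega'$ then $U(\Omega'')\subseteq U(\Omega')$), and it commutes with unions in the sense that $U(\Omega'_1\cup\Omega'_2)=U(\Omega'_1)\cup U(\Omega'_2)$. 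The first two properties are then immediate: nonnegativity of $\rho$ follows since $\rho(\Omega')=\rho_0(U(\Omega'))\ge 0$ by nonnegativity of $\rho_0$; and for monotonicity, if $\Omega''\subseteq\Omega'$ then $U(\Omega'')\subseteq U(\Omega')$, so $\rho(\Omega'')=\rho_0(U(\Omega''))\le\rho_0(U(\Omega'))=\rho(\Omega')$ by monotonicity of $\rho_0$.

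For submodularity I will use the definition given in Definition~\ref{def2.2}: fix $\Omega_2''\subseteq\Omega_2'\subseteq\Omega$ and $\Omega_0\subseteq\Omega\setminus\Omega_2'$, and I must show
\[
\rho(\Omega_2'\cup\Omega_0)-\rho(\Omega_2')\le\rho(\Omega_2''\cup\Omega_0)-\rho(\Omega_2'').
\]
Set $E'=U(\Omega_2')$, $E''=U(\Omega_2'')$, and $E_0=U(\Omega_0)$, all subsets of $\mathcal S$. By monotonicity of $U$ we have $E''\subseteq E'$, and since $U$ commutes with unions, $U(\Omega_2'\cup\Omega_0)=E'\cup E_0$ and $U(\Omega_2''\cup\Omega_0)=E''\cup E_0$. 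Hence the inequality to be proved becomes $\rho_0(E'\cup E_0)-\rho_0(E')\le\rho_0(E''\cup E_0)-\rho_0(E'')$, which is exactly submodularity of $\rho_0$ applied to the chain $E''\subseteq E'$ together with the set $E_0$ --- with one caveat described next.

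The main obstacle --- really the only subtle point --- is that Definition~\ref{def2.2} requires $E_0\subseteq\mathcal S\setminus E'$, whereas the set $E_0=U(\Omega_0)$ need not be disjoint from $E'=U(\Omega_2')$ even though $\Omega_0\cap\Omega_2'=\emptyset$ (distinct $r_e$-covers can share sets of $\mathcal S$). To handle this I would replace $E_0$ by $E_0\setminus E'$: since $E'\cup E_0=E'\cup(E_0\setminus E')$, the left-hand difference is unchanged, and since $\rho_0$ is monotone, $\rho_0(E''\cup(E_0\setminus E'))-\rho_0(E'')\le\rho_0(E''\cup E_0)-\rho_0(E'')$ requires only that enlarging the added set cannot decrease the marginal --- which is itself submodularity, or more simply one applies submodularity of $\rho_0$ directly to the chain $E''\subseteq E'$ with the legitimately disjoint increment $E_0\setminus E'$ and then uses $\rho_0(E''\cup(E_0\setminus E'))\le\rho_0(E''\cup E_0)$ by monotonicity to conclude. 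Either way, the disjointness technicality is absorbed by the monotonicity hypothesis on $\rho_0$, and the proof closes. (If one instead adopts the equivalent marginal-decreasing or subadditive characterization of submodularity, even this bookkeeping disappears, since those forms impose no disjointness condition.)
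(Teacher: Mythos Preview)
Your proof is correct and follows essentially the same route as the paper: nonnegativity and monotonicity are immediate, and for submodularity both you and the paper recognize that the increment $U(\Omega_0)$ need not be disjoint from $U(\Omega_2')$, then fix this by passing to the disjoint increment $E_0\setminus E'$ (the paper calls this $\mathcal S_1$) and combining one application of submodularity of $\rho_0$ with one application of monotonicity of $\rho_0$. Your bookkeeping is a bit more direct --- the paper routes through an auxiliary set $(\mathcal S''\cup\mathcal S_2)\setminus\mathcal S_1$ sandwiched between $\mathcal S''$ and $\mathcal S'$ --- but the ingredients and the logic are the same.
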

\begin{proof}
The nonnegativity and the monotonicity are obvious. To prove the submodularity, by Definition \ref{def2.2}, it is sufficient to show that for any $\Omega''\subseteq \Omega'\subseteq \Omega$ and $\Omega_0\subseteq \Omega\setminus \Omega'$,
\begin{equation}\label{eq17-1-3-3}
\rho(\Omega'\cup \Omega_0)-\rho(\Omega')\leq \rho(\Omega''\cup \Omega_0)-\rho(\Omega'').
\end{equation}

Denote $\bigcup_{\mathcal A\in \Omega'}\mathcal A=\mathcal S'$ and $\bigcup_{\mathcal A\in \Omega''}\mathcal A=\mathcal S''$. Since $\Omega''\subseteq \Omega'$, we have $\mathcal S''\subseteq \mathcal S'$. Denote $\mathcal S_1=\left(\bigcup_{\mathcal A\in \Omega'\cup\Omega_0}\mathcal A\right)\setminus \mathcal S'$ and $\mathcal S_2=\left(\bigcup_{\mathcal A\in \Omega''\cup\Omega_0}\mathcal A\right)\setminus \mathcal S''$. Then $\mathcal S_1\subseteq \mathcal S_2$. Combining this with the observation that $\mathcal S'\cup\mathcal S_1=\bigcup_{\mathcal A\in \Omega'\cup\Omega_0}\mathcal A\supseteq \bigcup_{\mathcal A\in \Omega''\cup\Omega_0}\mathcal A=\mathcal S''\cup \mathcal S_2$, we have
\begin{equation}\label{eq17-1-3-2}
\mathcal S''\subseteq (\mathcal S''\cup\mathcal S_2)\setminus \mathcal S_1\subseteq \mathcal S'.
\end{equation}
It follows that
\begin{align*}
  \rho(\Omega'\cup \Omega_0)-\rho(\Omega') & = \rho_0(\mathcal S'\cup \mathcal S_1)-\rho_0(\mathcal S')\\
  & \leq \rho_0(((\mathcal S''\cup \mathcal S_2)\setminus \mathcal S_1)\cup \mathcal S_1)-\rho_0((\mathcal S''\cup \mathcal S_2)\setminus \mathcal S_1)\\
     &\leq \rho_0(((\mathcal S''\cup \mathcal S_2)\setminus \mathcal S_1)\cup \mathcal S_1)-\rho_0(\mathcal S'')\\
     & =\rho_0(\mathcal S''\cup S_2)-\rho_0(\mathcal S'')\\ &=\rho(\Omega''\cup \Omega_0)-\rho(\Omega''),
\end{align*}
where the first inequality uses submodularity of $\rho_0$ and \eqref{eq17-1-3-2}, and the second inequality uses the monotonicity of $\rho_0$ and \eqref{eq17-1-3-2}. Inequality \eqref{eq17-1-3-3}, and thus the lemma, is proved.
\end{proof}

\begin{remark}\label{rem1-15-2}
{\rm If $\rho_0$ is nonnegative and submodular but is not monotone nondecreasing, then $\rho$ is not necessarily submodular. Consider the following example. Let $\mathcal S=\{S_1,S_2,S_3\}$ with $\rho_0(\{S_1\})=\rho_0(\{S_1,S_3\})=1$ and $\rho_0(\mathcal S')=0$ for any other sub-collection $\mathcal S'\subseteq \mathcal S$. It can be verified that $\rho_0$ is nonnegative and submodular. Consider sub-families $\Omega''=\{\{S_1\}\} \subseteq \Omega'=\{\{S_1\},\{S_1,S_2\}\}$ and $\Omega_0=\{\{S_1,S_2,S_3\}\}$, it can be calculated that
$$\rho(\Omega'\cup \Omega_0)-\rho(\Omega')=0-0=0>-1=0-1=\rho(\Omega''\cup \Omega_0)-\rho(\Omega'').$$
So, $\rho$ is not submodular.}
\end{remark}

Let $\hat{\rho}$ be the Lov\'{a}sz extension of $\rho$. By Theorem \ref{th1}, $\hat{\rho}$ is convex. Relaxing \eqref{eq11-22-2}, we have the following convex program:
\begin{align}\label{eq11-22-3}
  \min\ & \hat{\rho}(x)  \nonumber\\
 s.t.\    \sum_{e:\ e\in E}p_ey_e  & \geq qP, \nonumber\\
 \sum_{\mathcal A:\ \mathcal A\in \Omega_{e}}x_{\mathcal A}  & \geq y_e, \ \mbox{for any}\ e\in E\\
   x_{\mathcal A}& \geq 0\ \mbox{for}\ \mathcal A\in\Omega \nonumber\\
  1\geq y_e& \geq 0\ \mbox{for}\ e\in E\nonumber
\end{align}

\begin{lemma}
Convex program \eqref{eq11-22-3} is polynomial-time solvable.
\end{lemma}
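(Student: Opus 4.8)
The plan is to solve the convex program \eqref{eq11-22-3} with the ellipsoid method; the work then reduces to three checks: that the program has polynomial size, that its (convex) objective admits a polynomial-time evaluation procedure, and that a separating hyperplane for the epigraph of $\hat\rho$ can be produced in polynomial time from an evaluation oracle for $\rho_0$.

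First I would bound the number of variables. An $r_e$-cover is a size-$r_e$ subfamily of $\{S\in\mathcal S\colon e\in S\}$, so $|\Omega_e|=\binom{|\{S\colon e\in S\}|}{r_e}\le\binom{f}{r_{\max}}$; by Assumption 1 the exponent $r_{\max}$ is a constant, hence $\binom{f}{r_{\max}}$ is polynomial in $f\le|\mathcal S|$, and therefore $|\Omega|\le\sum_{e\in E}|\Omega_e|\le n\binom{f}{r_{\max}}$ is polynomial. Thus \eqref{eq11-22-3} has polynomially many variables $x_{\mathcal A},y_e$ and, besides box constraints on the variables, only $O(n)$ linear constraints. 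One may also impose $x_{\mathcal A}\le 1$ without changing the optimal value: since $\hat\rho$ inherits monotonicity from $\rho$ (Lemma \ref{lem1-15-1}), capping any $x_{\mathcal A}>1$ at $1$ keeps all constraints $\sum_{\mathcal A\in\Omega_e}x_{\mathcal A}\ge y_e$ satisfied and cannot increase the objective; this bounds the feasible region, as the ellipsoid method requires.

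Next, $\hat\rho$ is convex by Theorem \ref{th1}, and formula \eqref{eq1} gives a polynomial-time evaluation: sort the $|\Omega|$ coordinates of $x$ and add up $|\Omega|$ terms, each needing one value $\rho(E_i)=\rho_0\bigl(\bigcup_{\mathcal A\in E_i}\mathcal A\bigr)$, for a total of $O(|\Omega|\log|\Omega|)$ arithmetic operations and $|\Omega|$ calls to the $\rho_0$-oracle. A subgradient is just as cheap: for $x$ ordered as in Definition \ref{def1.3} with the induced chain $\emptyset=E_0\subsetneq E_1\subsetneq\cdots\subsetneq E_{|\Omega|}$, the greedy vector $g$ given by $g_{j_i}=\rho(E_i)-\rho(E_{i-1})$ is a subgradient of $\hat\rho$ at $x$ — the classical property of the Lov\'asz extension of a submodular function, applicable here because $\rho$ is submodular by Lemma \ref{lem1-15-1} — and computing $g$ takes $O(|\Omega|)$ further oracle calls.

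Finally I would pass to the epigraph form, minimizing $t$ over the bounded convex set $K$ of all $(x,y,t)$ for which $(x,y)$ satisfies the constraints of \eqref{eq11-22-3} together with $x_{\mathcal A}\le 1$, and $\hat\rho(x)\le t$. A separation oracle for $K$ checks the finitely many linear constraints directly, and when $\hat\rho(x)>t$ it returns the hyperplane $\langle g,z-x\rangle\le t-\hat\rho(x)$, which separates $(x,y,t)$ from $K$ since convexity gives $\hat\rho(z)\ge\hat\rho(x)+\langle g,z-x\rangle$ for all $z$ while $t-\hat\rho(x)<0$. Feeding this oracle to the ellipsoid method — equivalently, invoking the Gr\"otschel--Lov\'asz--Schrijver equivalence of separation and optimization — yields, in time polynomial in the input size and $\log(1/\delta)$, a feasible point within additive error $\delta$ of the optimum, and choosing $\delta$ inverse-polynomially small gives an essentially optimal solution. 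The only delicate point, hence the main obstacle, is that $\hat\rho$ is piecewise linear and nondifferentiable, so one must work with the greedy-chain subgradient rather than a gradient; after that the argument is routine. The phrase ``polynomial-time solvable'' is understood here in the usual convex-programming sense of ``solvable to within arbitrarily small error in polynomial time,'' which is all that the later approximation analysis needs; for an exact statement one can add that $\hat\rho$ is piecewise linear with breakpoints of polynomially bounded bit-length, so \eqref{eq11-22-3} is equivalent to a linear program with exponentially many constraints but a polynomial separation oracle and is therefore exactly solvable — a refinement we will not need.
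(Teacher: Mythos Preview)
Your argument is correct but takes a different route from the paper. You apply the ellipsoid method directly to the convex program \eqref{eq11-22-3}: bound the number of variables via Assumption~1, evaluate $\hat\rho$ through the sorting formula \eqref{eq1}, produce a subgradient from the greedy chain $g_{j_i}=\rho(E_i)-\rho(E_{i-1})$, and separate over the epigraph. The paper instead reformulates \eqref{eq11-22-3} as a linear program by replacing $\hat\rho$ with the convex closure $\rho^-$, introducing exponentially many multipliers $\lambda_{\Omega'}$; it then passes to the dual, which has polynomially many variables and exponentially many constraints of the form $a+\sum_{\mathcal A\in\Omega'}c_{\mathcal A}\le\rho(\Omega')$, and observes that these are separable by minimizing the submodular function $g(\Omega')=\rho(\Omega')-\sum_{\mathcal A\in\Omega'}c_{\mathcal A}$.

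Your approach is the more elementary one: it avoids LP duality and submodular function minimization, using only the explicit greedy subgradient. The price is the caveat you flag at the end --- the raw ellipsoid method over a convex body gives only a $\delta$-approximate optimum, and you have to appeal to piecewise linearity (or argue, as you do, that approximate optimality suffices for the downstream analysis) to claim exact solvability. The paper's detour through an explicit LP sidesteps this: once the problem is a linear program with a polynomial separation oracle on the dual, exact polynomial-time solvability follows from standard LP theory without further discussion. Either proof is acceptable; yours is shorter, theirs is cleaner on the exactness point.
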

\begin{proof}
It is known that (see \cite{Dughmi}) for a submodular function $\rho$, its Lov'asz extension $\hat \rho (x)=\rho^-(x)$ for any $x\in [0,1]^{|\Omega|}$, where $\rho^-$ is the {\em convex closure} of $\rho$ defined as follows. For each sub-family $\Omega'$ of $\Omega$, denote by $\chi_{\Omega'}$ as the indicator vector of $\Omega'$. The convex closure of $\rho$ is the function $\rho^-$: $[0,1]^{|\Omega|}\mapsto \mathbb{R}$ such that for any vector $x\in [0,1]^{|\Omega|}$, $\rho^-(x)=\min \{\sum_{\Omega'\subseteq \Omega}\lambda_{\Omega'}\rho(\Omega'): \sum_{\Omega'\subseteq \Omega}\lambda_{\Omega'}\chi_{\Omega'}=x, \sum_{\Omega'\subseteq \Omega}\lambda_{\Omega'}=1, \lambda_{\Omega'}\geq 0  \}$. Hence $\eqref{eq11-22-3}$ can be rewritten as:
\begin{align}\label{eq11-22-5}
  \min\ & \sum_{\Omega'\subseteq \Omega}\lambda_{\Omega'}\rho(\Omega')  \nonumber\\
s.t.\ \ \ & \sum_{\Omega'\subseteq \Omega}\lambda_{\Omega'} =1, \nonumber\\
 &    \sum_{e:\ e\in E}p_ey_e   \geq qP, \nonumber\\
 \sum_{\Omega': \mathcal A\in \Omega'\subseteq \Omega}\lambda_{\Omega'}& =x_{\mathcal A}, \ \mbox{for any}\ \mathcal A\in \Omega \nonumber\\
 \sum_{\mathcal A:\ \mathcal A\in \Omega_{e}}x_{\mathcal A}  & \geq y_e, \ \mbox{for any}\ e\in E\\
     \lambda_{\Omega'}& \geq 0\ \mbox{for}\ \Omega'\subseteq\Omega\nonumber\\
   x_{\mathcal A}& \geq 0\ \mbox{for}\ \mathcal A\in\Omega\nonumber\\
  1\geq y_e& \geq 0\ \mbox{for}\ e\in E\nonumber
\end{align}
Notice that this is a linear program. For each element $e$, $|\Omega_e|\leq b=\max_e\binom{f}{r_{e}}$. Since in Assumption 1, we have assumed that $r_{\max}$ is upper bounded by a constant, the number of variables in the form of $x_{\mathcal A}$ or $y_e$ is polynomial. However, the number of variables in the form of $\lambda_{\Omega'}$ is exponential.

Consider the dual program of \eqref{eq11-22-5}:
\begin{align}\label{eq11-22-6}
 & \max\  a+bqP-\sum_{e\in E}f_e  \nonumber\\
s.t.\ \ \ & a+\sum_{\mathcal A\in \Omega'}c_{\mathcal A}\leq \rho(\Omega'), \ \mbox{for any}\ \Omega'\subseteq\Omega\\
 &    \sum_{e:\ e\in \mathcal A}d_e-c_{\mathcal A} \leq 0,  \ \mbox{for any}\ \mathcal A\in \Omega \nonumber\\
 & p_eb-d_e-f_e \leq 0, \ \mbox{for any}\ e\in E \nonumber\\
 & b\geq 0\ \mbox{and}\ d_e, f_e \geq 0 \ \mbox{for}\ e\in E\nonumber
\end{align}
Since both $|\Omega|$ and $|E|$ are polynomial, to solve \eqref{eq11-22-6}, it suffices to construct a separation oracle for the first set of constraints.

Define $g(\Omega')=\rho(\Omega')-\sum_{\mathcal A\in \Omega'}c_{\mathcal A}$ for any $\Omega'\subseteq\Omega$. Since $g$ is obtained by subtracting a modular function from a submodular function, $g$ is also a submodular function. Hence, by finding a minimizer of $g$, which can be done in polynomial time, and then check whether its $g$-value is at least $a$, we can either claim the validity of the first set of constraints or find out a violated constraint.
\end{proof}

Since \eqref{eq11-22-3} is a relaxation of \eqref{eq11-22-2}, we have $opt_{cp}\leq opt$, where $opt_{cp}$ is the optimal value of \eqref{eq11-22-3} and $opt$ is the optimal integer value of \eqref{eq11-22-2} (which is also the optimal value of SCPMC).

\subsection{Rounding Algorithm}

For a sub-collection $\mathcal S'\subseteq \mathcal S$, denote by $\mathcal C(\mathcal S')$ the set of elements fully covered by $\mathcal S'$. Two parameters $s,t$ are needed which are chosen in Theorem \ref{th2} to guarantee the desired ratio with high probability. 
The rounding algorithm consists of two phases. In the first phase, a deterministic rounding is executed to form a sub-collection $\mathcal S_1$. In the second phase, a randomized rounding is executed to form a sub-collection $\mathcal S_2$. The output is the union of $\mathcal S_1$ and $\mathcal S_2$.

\begin{algorithm}
\caption{Algorithm for $SCPMC$}
{\bf Input:} A $SCPMC$ instance $(E,\mathcal S,r,p,q,\rho_0)$, two parameters $s,t$ satisfying $1<t<s\leq 1/q$, and a real positive number $\varepsilon<q$.

{\bf Output:} A sub-collection $\mathcal S'$ which has total covering profit at least $(q-\varepsilon)P$.

\begin{algorithmic}[1]\label{alg1}
\STATE Find an optimal solution $(x^{*},y^{*})$ to \eqref{eq11-22-3}.
\STATE $\mathcal S_{1}\leftarrow \emptyset$, $\mathcal S_{2}\leftarrow\emptyset$.
\FOR {all $e$ with $y^{*}_e\geq \frac{1}{s}$}
    \STATE For each $\mathcal A\in \Omega_e$ with $x^{*}_{\mathcal A}\geq \frac{1}{bs}$, let $\hat x_{\mathcal A}\leftarrow 1$.
\ENDFOR
\STATE For all $x^*_{\mathcal A}$ which is not rounded up to $1$, set $\hat x_{\mathcal A}\leftarrow 0$.
\STATE $\mathcal S_1\leftarrow\{S\colon S\in \mathcal A\ \mbox{with}\ \hat x_{\mathcal A}=1\}$.
\STATE If $\mathcal S_1$ has total covering profit at least $(q-\varepsilon)P$ then output $\mathcal S'\leftarrow \mathcal S_1$ and stop.
\STATE $E'\leftarrow E-\mathcal C(\mathcal S_1)$, $q'\leftarrow (qP-p(\mathcal C(\mathcal S_1)))/P$.
\FOR {$i=1$ to $s\ln(\frac{s}{s-t})b$ }
    \STATE Pick $\theta \in [0,1]$ randomly uniformly.
    \STATE For each remaining $\mathcal A$ with $x^*_{\mathcal A}\geq \theta$, set $\hat x_{\mathcal A}\leftarrow 1$ and $\mathcal S_2\leftarrow \mathcal S_2\cup \{S\colon S\in \mathcal A\}$.
\ENDFOR

\STATE Output $\mathcal S'=\mathcal S_2\cup \mathcal S_2$.

\end{algorithmic}\label{algo1}
\end{algorithm}

\subsection{Approximation Analysis}
\begin{lemma}\label{lem1}
For the collection of sets $\mathcal S_1$ computed by Algorithm \ref{algo1}, $\rho_0(\mathcal S_1)\leq bs\cdot opt_{cp}$. Furthermore, all elements with $y^*_e\geq \frac{1}{s}$ are fully covered by $\mathcal S_1$.
\end{lemma}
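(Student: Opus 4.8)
The plan is to establish the two assertions separately. First I would handle the covering claim, which is essentially a bookkeeping argument. Fix an element $e$ with $y^*_e \geq \frac{1}{s}$. By the second constraint of \eqref{eq11-22-3}, $\sum_{\mathcal A \in \Omega_e} x^*_{\mathcal A} \geq y^*_e \geq \frac{1}{s}$. Since $|\Omega_e| \leq b$, an averaging argument forces at least one $r_e$-cover $\mathcal A \in \Omega_e$ to satisfy $x^*_{\mathcal A} \geq \frac{1}{bs}$: otherwise $\sum_{\mathcal A \in \Omega_e} x^*_{\mathcal A} < b \cdot \frac{1}{bs} = \frac{1}{s}$, a contradiction. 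For that cover, Step 4 of the algorithm sets $\hat x_{\mathcal A} \leftarrow 1$, so all $r_e$ sets of $\mathcal A$ are placed in $\mathcal S_1$ by Step 7; since $e$ belongs to every set of $\mathcal A$ and $|\mathcal A| = r_e$, this fully covers $e$. This proves the second sentence of the lemma.

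For the cost bound, the key point is to exhibit a fractional vector dominating the characteristic vector $\chi_{\mathcal S_1}$ (in the $\Omega$-indexed space) whose Lov\'asz-extension value is at most $bs \cdot opt_{cp}$, then invoke monotonicity of $\rho$ together with the identity $\rho(\mathcal S_1) = \rho_0(\mathcal S_1)$ coming from \eqref{eq17-1-3-1}. Concretely, let $\widehat{\Omega}_1 = \{\mathcal A : \hat x_{\mathcal A} = 1\}$ be the set of covers rounded up in the first phase; by construction every such $\mathcal A$ has $x^*_{\mathcal A} \geq \frac{1}{bs}$, i.e. $bs \cdot x^*_{\mathcal A} \geq 1$. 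Hence the vector $bs \cdot x^*$ coordinatewise dominates $\chi_{\widehat{\Omega}_1}$. Now $\mathcal S_1 = \bigcup_{\mathcal A \in \widehat{\Omega}_1} \mathcal A$, so by the definition \eqref{eq17-1-3-1} of $\rho$ we have $\rho_0(\mathcal S_1) = \rho(\widehat{\Omega}_1) = \hat\rho(\chi_{\widehat{\Omega}_1})$, where the last equality is because $\hat\rho$ agrees with $\rho$ on $\{0,1\}$-vectors. By monotonicity of $\hat\rho$ (which follows from monotonicity of $\rho$, established in Lemma \ref{lem1-15-1}) and then positive homogeneity of the Lov\'asz extension, $\hat\rho(\chi_{\widehat{\Omega}_1}) \leq \hat\rho(bs \cdot x^*) = bs \cdot \hat\rho(x^*) = bs \cdot opt_{cp}$, since $x^*$ (together with $y^*$) is an optimal solution to \eqref{eq11-22-3}. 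Chaining these gives $\rho_0(\mathcal S_1) \leq bs \cdot opt_{cp}$.

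The main obstacle — or rather the point needing the most care — is justifying monotonicity of the Lov\'asz extension $\hat\rho$ on the relevant domain and making sure the vector $bs \cdot x^*$ actually lies in a range where the needed properties hold; one must check that dominance $\chi_{\widehat{\Omega}_1} \leq bs\cdot x^*$ holds on every coordinate (trivial on coordinates of $\widehat{\Omega}_1$ since $bs\cdot x^*_{\mathcal A}\ge 1$, and trivial elsewhere since $\chi$ is $0$ there), and that $\hat\rho$ is monotone nondecreasing whenever $\rho$ is — a standard fact following from the explicit formula \eqref{eq1} and the nonnegativity of the increments $\rho(E_i)-\rho(E_{i-1})$ implied by monotonicity of $\rho$. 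Positive homogeneity is already noted in the excerpt right after Definition \ref{def1.3}, so no extra work is needed there. Everything else is the routine combinatorial unwinding of the algorithm's first phase sketched above.
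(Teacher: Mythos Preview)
Your proof is correct and follows essentially the same approach as the paper: the averaging argument for the covering claim, and coordinatewise domination combined with monotonicity and positive homogeneity of the Lov\'asz extension for the cost bound. The only minor difference is that the paper inserts an intermediate truncated vector $z_{\mathcal A}=\min\{1,\,bs\,x^*_{\mathcal A}\}$ so that it can invoke the integral representation of $\hat\rho$ (Theorem~\ref{def1.4}) on $[0,1]^{|\Omega|}$ for the comparison $\hat x\leq z$ and then use nonnegativity of $\rho$ for the step $z\leq bsx^*$, whereas you go directly from $\chi_{\widehat\Omega_1}$ to $bsx^*$ by appealing to monotonicity of $\hat\rho$ on all of $\mathbb{R}_+^{|\Omega|}$.
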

\begin{proof}
Let $\hat x$ be the vector defined after Line 6 of Algorithm \ref{algo1}, and let $z$ be the vector with $z_{\mathcal A}=\min\{1,bsx^*_{\mathcal A}\}$ for $\mathcal A\in \Omega$.

Recall that Lov\'asz extension in Definition \ref{def1.3} requires an ordering of elements in a non-increasing manner. By the definition of $z$ and by the nonnegativity of $\rho$, we can take the ordering of elements defining $\hat{\rho}(z)$ and $\hat{\rho}(bsx^*)$ to be the same and
\begin{equation}\label{eq11-29-1}
\hat\rho(z)\leq \hat\rho(bsx^*).
\end{equation}

We claim that $\hat x_{\mathcal A}\leq z_{\mathcal A}$ holds for any index $\mathcal A\in\Omega$. This is clearly true if $\hat x_{\mathcal A}=0$. For an index $\mathcal A$ with $\hat x_{\mathcal A}=1$, we have $x_{\mathcal A}^*\geq 1/bs$ (by Line 4 of Algorithm \ref{algo1}), which implies $z_{\mathcal A}=1$. The claim is proved. It follows that for any $\theta\in [0,1]$ and for any index $\mathcal A\in\Omega$, $\hat x^{\theta}_{\mathcal A}\leq z_{\mathcal A}^{\theta}$ (recall the notation $x_i^{\theta}$ defined in Theorem \ref{def1.4}). Then, by the monotonicity of $\rho$, we have
\begin{equation}\label{eq11-29-2}
\rho(\hat x^{\theta})\leq \rho(z^{\theta}).
\end{equation}

Combining \eqref{eq11-29-1}, \eqref{eq11-29-2} with the positive homogeneous property of Lov\'{a}sz extension,
$$
\rho_0(\mathcal S_1)=\hat \rho(\hat x)=\int_{0}^{1} \rho({\hat x}^{\theta})d\theta \leq \int_{0}^{1} \rho({z}^{\theta})d\theta=\hat \rho( z) \leq  \hat \rho({bsx^*})
  =bs\hat \rho(x^*)=bs\cdot opt_{cp}.
$$

Next, consider the second half of the lemma. For each element $e$ with $y^*_e\geq \frac{1}{s}$, by the second constraint of \eqref{eq11-22-3}, and by the observation that $|\Omega_e|\leq b$, we have
\begin{equation}\label{eq11-30-2}
\max_{\mathcal A\in\Omega_e}x^*_{\mathcal A}\geq y_e^*/b\geq 1/bs.
\end{equation}
Hence there is at least one $r_e$-cover $\mathcal A\in\Omega_e$ with value $x^*_{\mathcal A}\geq \frac{1}{bs}$, and thus $\hat x_{\mathcal A}=1$. That is, after the deterministic rounding, at least one $r_e$-cover is chosen into $\mathcal S_1$, and thus $e$ is fully covered.
\end{proof}

\begin{lemma}\label{lem2}
For the collection of sets $\mathcal S_2$ computed by Algorithm \ref{algo1}, the expected cost of $\mathcal S_2$ satisfies $\mathbb{E}[\rho_0(\mathcal S_2)]\leq bs\ln(\frac{s}{s-t})opt_{cp}$.
\end{lemma}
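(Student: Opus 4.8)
The plan is to decompose $\mathcal S_2$ over the $N := bs\ln\frac{s}{s-t}$ iterations of the second \textbf{for} loop of Algorithm \ref{algo1} and apply subadditivity. For iteration $i$, let $\mathcal S_2^{(i)}$ be the family of all sets lying in some ``remaining'' $r_e$-cover $\mathcal A$ (i.e.\ one not already rounded up in the first phase) with $x^*_{\mathcal A}$ at least the random threshold $\theta_i$ drawn in that iteration; then clearly $\mathcal S_2\subseteq\bigcup_{i=1}^N\mathcal S_2^{(i)}$. Since $\rho_0$ is nonnegative and submodular it is subadditive by \eqref{eq12-28-1}, so by monotonicity and a trivial induction $\rho_0(\mathcal S_2)\le\rho_0\big(\bigcup_{i=1}^N\mathcal S_2^{(i)}\big)\le\sum_{i=1}^N\rho_0(\mathcal S_2^{(i)})$. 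By linearity of expectation it then suffices to show $\mathbb{E}[\rho_0(\mathcal S_2^{(i)})]\le opt_{cp}$ for a single iteration $i$, and since each iteration's random choice is marginally uniform on $[0,1]$, this reduces to a statement about one uniform threshold.

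The next step is to translate one iteration into the language of the set function $\rho$ on $2^\Omega$. Let $x'$ be the vector obtained from the optimal solution $x^*$ of \eqref{eq11-22-3} by zeroing out every coordinate $\mathcal A$ that was rounded up to $1$ in the first phase and keeping $x'_{\mathcal A}=x^*_{\mathcal A}$ on the remaining coordinates; note $x'\in[0,1]^{|\Omega|}$ because $x^*\in[0,1]^{|\Omega|}$ (each $x^*_{\mathcal A}$ is a partial sum of the $\lambda$'s in \eqref{eq11-22-5}). In iteration $i$ the algorithm draws $\theta_i\in[0,1]$ uniformly and inserts into $\mathcal S_2^{(i)}$ exactly the sets belonging to covers $\mathcal A$ with $x'_{\mathcal A}\ge\theta_i$; hence $\mathcal S_2^{(i)}=\bigcup_{\mathcal A\in\Omega'_i}\mathcal A$ with $\Omega'_i=\{\mathcal A:x'_{\mathcal A}\ge\theta_i\}$, whose indicator vector is exactly $(x')^{\theta_i}$ in the notation of Theorem \ref{def1.4}. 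By the definition \eqref{eq17-1-3-1} of $\rho$, this gives $\rho_0(\mathcal S_2^{(i)})=\rho\big((x')^{\theta_i}\big)$.

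The expectation now falls out of the expectation form of the Lov\'asz extension (Theorem \ref{def1.4}): $\mathbb{E}_{\theta_i}[\rho_0(\mathcal S_2^{(i)})]=\mathbb{E}_{\theta\in[0,1]}[\rho((x')^\theta)]=\hat\rho(x')$. Finally, $0\le x'_{\mathcal A}\le x^*_{\mathcal A}$ for every $\mathcal A$, so $(x')^\theta\le(x^*)^\theta$ coordinatewise for every $\theta$; since $\rho$ is monotone nondecreasing (Lemma \ref{lem1-15-1}) this yields $\rho((x')^\theta)\le\rho((x^*)^\theta)$, and integrating over $\theta\in[0,1]$ gives $\hat\rho(x')\le\hat\rho(x^*)=opt_{cp}$. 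Summing over the $N$ iterations then gives $\mathbb{E}[\rho_0(\mathcal S_2)]\le N\cdot opt_{cp}=bs\ln\frac{s}{s-t}\cdot opt_{cp}$, which is the claim.

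There is no deep obstacle here; the delicate point is the bookkeeping in the middle step — pinning down the correct fractional vector $x'$ (with the first-phase coordinates zeroed out), verifying it lies in $[0,1]^{|\Omega|}$, and matching the combinatorial operation ``add all sets of the selected $r_e$-covers'' with evaluating $\rho$ at the threshold vector $(x')^\theta$. It is also worth stressing that it is monotonicity of $\rho$, not merely submodularity, that licenses passing from $x'$ back to $x^*$. The union bound over the $N$ rounds is lossy, but it is precisely this factor $N$ that produces the $\ln\frac{s}{s-t}$ term in the stated bound.
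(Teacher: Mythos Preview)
Your proof is correct and follows essentially the same approach as the paper's: decompose $\mathcal S_2$ over the $N$ iterations, use subadditivity of $\rho_0$, and identify the per-iteration expected cost with the Lov\'asz extension via Theorem~\ref{def1.4}. Your treatment is in fact more careful than the paper's brief argument: you explicitly introduce the zeroed-out vector $x'$ for the ``remaining'' covers and invoke monotonicity of $\rho$ (Lemma~\ref{lem1-15-1}) to pass from $\hat\rho(x')$ back to $\hat\rho(x^*)=opt_{cp}$, whereas the paper simply asserts the per-iteration expectation equals $\hat\rho(x^*)$.
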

\begin{proof}
Observe that each of the second ``for'' loop of Algorithm \ref{algo1} is in fact a realization of Lov\'{a}sz extension in Theorem \ref{def1.4} (one may refer to \cite{Chekuri}). So the expectation of the cost of those sets in each iteration is $\hat \rho(x^*)=opt_{cp}$. Since $\mathcal S_2$ is the union of these sets, so after $bs\ln(\frac{s}{s-t})$ iterations, $\mathbb{E}[\rho_0(\mathcal S_2)]\leq bs\ln(\frac{s}{s-t})opt_{cp}$.
\end{proof}

In the following, when we say that element $e$ is fully covered by $\mathcal S_2$, it means that the {\em remaining} covering requirement of $e$ is satisfied by $\mathcal S_2$. Using such a convention, we denote by $\mathcal C(\mathcal S_2)$ the set of elements fully covered by $\mathcal S_2$, and let $p(\mathcal S_2)=\sum_{e\in\mathcal C(\mathcal S_2)}p(e)$. Notice that $\mathcal S_2$ is in fact a random sub-collection, and thus $p(\mathcal S_2)$ is a random value. To be more strict, let $\hat y_{e}$ be the random variable which takes value $1$ if $e$ is fully covered by $\mathcal S_2$, and takes value $0$ otherwise. Then \begin{equation}\label{eq12-27-1}
p(\mathcal S_2)=\sum_{e\in \mathcal C(\mathcal S_2)}p(e){\hat y_{e}}.
\end{equation}
The next lemma gives an upper bound for the expected value of $p(\mathcal S_2)$.

\begin{lemma}\label{lem3}
For the collection of sets $\mathcal S_2$ computed by Algorithm \ref{algo1}, the expected profit of $\mathcal S_2$ satisfies $E[p(\mathcal S_2)]\geq tq'P$.
\end{lemma}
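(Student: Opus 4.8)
The plan is to lower-bound, for each element $e\in E'$, the probability that $e$ becomes fully covered by $\mathcal S_2$, and then combine these estimates via linearity of expectation together with the covering constraint of \eqref{eq11-22-3}. First I would record two structural facts about $E'$. By Lemma \ref{lem1}, every element with $y^*_e\ge\tfrac1s$ is fully covered by $\mathcal S_1$, hence belongs to $\mathcal C(\mathcal S_1)$ and not to $E'=E-\mathcal C(\mathcal S_1)$; therefore $y^*_e<\tfrac1s$ for all $e\in E'$. Moreover, for $e\in E'$ no $r_e$-cover $\mathcal A\in\Omega_e$ can have been rounded up in the first phase, since rounding up any such $\mathcal A$ would place $r_e$ sets containing $e$ into $\mathcal S_1$ and hence fully cover $e$; so the whole family $\Omega_e$ is available throughout the second phase.

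Next I would analyze a single iteration of the second for-loop. For $\theta$ drawn uniformly from $[0,1]$, a cover $\mathcal A$ is selected exactly when $\theta\le x^*_{\mathcal A}$; letting $\mathcal A^*$ be a maximizer of $x^*_{\mathcal A}$ over $\Omega_e$, the probability that at least one cover of $\Omega_e$ is selected is at least $\min\{1,x^*_{\mathcal A^*}\}\ge\min\{1,y^*_e/b\}=y^*_e/b$, where I used $x^*_{\mathcal A^*}\ge\tfrac1{|\Omega_e|}\sum_{\mathcal A\in\Omega_e}x^*_{\mathcal A}\ge y^*_e/b$ (by the second constraint of \eqref{eq11-22-3} and $|\Omega_e|\le b$) and $y^*_e\le1\le b$. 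Since selecting even one $r_e$-cover already puts $r_e$ sets containing $e$ into $\mathcal S_2$, the element $e$ is fully covered by $\mathcal S_2$ whenever some cover of $\Omega_e$ is selected in at least one of the $N:=bs\ln\!\big(\tfrac{s}{s-t}\big)$ independent iterations. Hence
\[
\Pr[\hat y_e=0]\ \le\ \Big(1-\tfrac{y^*_e}{b}\Big)^{N}\ \le\ e^{-Ny^*_e/b}\ =\ \Big(\tfrac{s-t}{s}\Big)^{s\,y^*_e},
\]
so that $\Pr[\hat y_e=1]\ge 1-\big(\tfrac{s-t}{s}\big)^{s\,y^*_e}$.

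The third step linearizes this estimate. The function $\phi(y)=1-\big(\tfrac{s-t}{s}\big)^{sy}$ is concave on $[0,\infty)$ (it is $1$ minus a convex exponential, using $0<\tfrac{s-t}{s}<1$), and satisfies $\phi(0)=0$ and $\phi(\tfrac1s)=\tfrac ts$; therefore on $[0,\tfrac1s]$ its graph lies on or above the chord through these endpoints, giving $\phi(y)\ge ty$ there. As $y^*_e\in[0,\tfrac1s]$ for every $e\in E'$, linearity of expectation (applied to $p(\mathcal S_2)=\sum_{e\in E'}p_e\hat y_e$) yields
\[
\mathbb E[p(\mathcal S_2)]=\sum_{e\in E'}p_e\,\Pr[\hat y_e=1]\ \ge\ t\sum_{e\in E'}p_e\,y^*_e .
\]
Finally, the covering constraint $\sum_{e\in E}p_ey^*_e\ge qP$ together with $y^*_e\le1$ gives $\sum_{e\in E'}p_ey^*_e\ge qP-\sum_{e\in\mathcal C(\mathcal S_1)}p_e=qP-p(\mathcal C(\mathcal S_1))=q'P$, and combining this with the previous display proves $\mathbb E[p(\mathcal S_2)]\ge tq'P$.

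I do not expect a genuine obstacle here; the work is essentially bookkeeping. The point that needs care is that the loop length $N=bs\ln\frac{s}{s-t}$ is precisely calibrated so that the concave chord bound produces the clean multiplicative factor $t$ exactly at the endpoint $y=\tfrac1s$ — any other choice of $N$ would weaken the bound — and that $E'$ contains only elements whose $y^*_e$ lies in the range $[0,\tfrac1s]$ on which that bound is valid. One should also dispatch the harmless degenerate case $x^*_{\mathcal A}>1$, in which the per-iteration success probability is simply $1$ and all the inequalities above remain valid.
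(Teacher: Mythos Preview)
Your proposal is correct and follows essentially the same route as the paper's proof: both bound $\Pr[\hat y_e=0]$ by $(1-y^*_e/b)^N\le((s-t)/s)^{sy^*_e}$, linearize via the convexity/concavity of the exponential on $[0,1/s]$ to obtain $\Pr[\hat y_e=1]\ge ty^*_e$, and then combine with the constraint $\sum_{e\in E'}p_ey^*_e\ge q'P$. If anything, you are slightly more careful than the paper in verifying that for $e\in E'$ all of $\Omega_e$ is still ``remaining'' in the second phase, and in noting the harmless case $x^*_{\mathcal A}>1$.
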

\begin{proof}
Since $\mathbb{E}[p(\mathcal S_2)]=\sum_{e\in E'}p(e)Pr[\hat{y}_e=1]$ and
$$
   \sum_{e\in E'}p(e)y^*_e \geq qP-\sum_{e\in \mathcal C(\mathcal S_1)}p(e)y^*_e\geq  qP-\sum_{e\in \mathcal C(\mathcal S_1)}p(e) =q'P,
$$
it suffices to prove that for each $e\in E'$,
 \begin{equation}\label{eq11-30-1}
   Pr[\hat{y}_e=1]\geq ty^*_e.
 \end{equation}
Notice that for each $e\in E'$, $y^*_e\leq 1/s$. Since we have assumed $t<s$, so $ty^*_e<1$. Then, proving \eqref{eq11-30-1} is equivalent to proving
\begin{equation}\label{eq11-30-3}
    Pr[\hat{y}_e=0]\leq 1-ty^*_e.
\end{equation}

In a ``for'' loop with a uniformly randomly chosen $\theta\in[0,1]$, an $r_e$-cover $\mathcal A$ is chosen into $\mathcal S_2$ if and only if $x^*_{\mathcal A}\geq \theta$. For an element $e\in E'$, it is not fully covered by those sets chosen into $\mathcal S_2$ in this ``for'' loop if and only if $\theta>\max\{x^*_{\mathcal A}\colon \mathcal A\in\Omega_e\}$. This occurs with probability $1-\max\limits_{\mathcal A\in \Omega_e} x^*_{\mathcal A}$. Since $\max\limits_{\mathcal A\in \Omega_e} x^*_{\mathcal A}\geq \frac{y^*_e}{b}$ (see \eqref{eq11-30-2}), we have
$$
  1-\max\limits_{\mathcal A\in\Omega_e} x^*_{\mathcal A}\leq 1-\frac{y^*_e}{b}.
$$
So, after $bs\ln(\frac{s}{s-t})$ iterations,
$$
Pr[\hat{y}_e=0] \leq \left(1-\frac{y^*_e}{b}\right)^{bs\ln(\frac{s}{s-t})}\leq e^{-sy^*_e \ln(\frac{s}{s-t})}=\left(\frac{s}{s-t}\right)^{-sy^*_e},
$$
where the second inequality uses the fact that $1-x\leq e^{-x}$. Denote $f(x)=(\frac{s}{s-t})^{-sx}$ and $g(x)=1-tx$. Notice that $f(x)$ is a convex function and $g(x)$ is a linear function. Furthermore, $f(0)=g(0)$, $f(1/s)=g(1/s)$. So $f(x)\leq g(x)$ in interval $[0,1/s]$. Since for each $e\in E'$, $0\leq y^*_e\leq 1/s$. So, $(\frac{s}{s-t})^{-sy^*_e}\leq 1-ty^*_e$. Property \eqref{eq11-30-3} is proved, and the lemma follows.
\end{proof}

\begin{remark}\label{rem1}
{\rm One may be wondering what if $tq'P$ is larger than the profit of those remaining elements which are not fully covered by $\mathcal S_1$. This cannot happen because after the first stage of deterministic rounding, the total profits of remaining elements is $q'P+(1-q)P$. Since it is required that $1<t< s\leq 1/q\leq 1+\frac{1-q}{q'}$, we have $tq'P<q'P+(1-q)P$.}
\end{remark}

Now we will show that by choosing suitable parameters $s$ and $t$, Algorithm \ref{algo1} produces a feasible solution with performance ratio $O(b)$ with high probability.

\begin{Theorem}\label{th2}
Setting $s=1/q$ and $t=1/\sqrt{q}$, Algorithm \ref{alg1} produces a feasible solution to SCPMC with high probability whose cost is $O(b)opt_{cp}$, where $b=\max_e\binom{f}{r_{e}}$.
\end{Theorem}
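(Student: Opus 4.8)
The plan is to prove the theorem by analysing the two ways Algorithm~\ref{alg1} can terminate and then amplifying a single-run guarantee by independent repetition; throughout, ``feasible'' is meant in the bicriteria sense of covering total profit at least $(q-\varepsilon)P$, as in the output specification of Algorithm~\ref{alg1}. With the prescribed $s=1/q$ and $t=1/\sqrt q$ one first records that $1<t<s=1/q$ (both inequalities are equivalent to $q<1$), that $\ln\frac{s}{s-t}=\ln\frac{1}{1-\sqrt q}$, and hence that the randomized loop runs $\frac{b}{q}\ln\frac{1}{1-\sqrt q}$ times, which is $O(b)$ for constant $q$. If the algorithm stops at Line~8 it returns $\mathcal S_1$, which already covers at least $(q-\varepsilon)P$ by the stopping test and, by Lemma~\ref{lem1}, has cost $\rho_0(\mathcal S_1)\le bs\cdot opt_{cp}=\frac{b}{q}\,opt_{cp}=O(b)\,opt_{cp}$; this branch is deterministic. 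So assume the algorithm enters the randomized phase; then $p(\mathcal C(\mathcal S_1))<(q-\varepsilon)P$, which by the definition of $q'$ means $q'>\varepsilon$ (and always $q'\le q$). The case $opt_{cp}=0$ is trivial, since then Lemmas~\ref{lem1} and~\ref{lem2} force $\rho_0(\mathcal S_1)=\rho_0(\mathcal S_2)=0$ almost surely.

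For the cost of $\mathcal S'=\mathcal S_1\cup\mathcal S_2$, subadditivity~\eqref{eq12-28-1} gives $\rho_0(\mathcal S')\le\rho_0(\mathcal S_1)+\rho_0(\mathcal S_2)$; Lemma~\ref{lem1} bounds the first summand deterministically by $\frac{b}{q}\,opt_{cp}$, and Lemma~\ref{lem2} bounds $\mathbb E[\rho_0(\mathcal S_2)]$ by $\frac{b}{q}\ln\frac{1}{1-\sqrt q}\,opt_{cp}$, so $\mathbb E[\rho_0(\mathcal S')]\le\mu$ with $\mu:=\frac{b}{q}\bigl(1+\ln\frac{1}{1-\sqrt q}\bigr)opt_{cp}=O(b)\,opt_{cp}$. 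For the coverage, $\mathcal C(\mathcal S')$ contains $\mathcal C(\mathcal S_1)$ and $\mathcal C(\mathcal S_2)$ disjointly, so $p(\mathcal C(\mathcal S'))\ge p(\mathcal C(\mathcal S_1))+p(\mathcal S_2)=(q-q')P+p(\mathcal S_2)$, whence $\mathcal S'$ covers at least $(q-\varepsilon)P$ as soon as $p(\mathcal S_2)\ge(q'-\varepsilon)P$. Since $0\le p(\mathcal S_2)\le p(E')=(1-q+q')P$, I would apply Markov's inequality to the deficit $p(E')-p(\mathcal S_2)$, whose expectation is at most $(1-q+q')P-tq'P$ by Lemma~\ref{lem3}; using $p(E')-(q'-\varepsilon)P=(1-q+\varepsilon)P>0$ this gives
\[
\Pr\bigl[p(\mathcal S_2)<(q'-\varepsilon)P\bigr]\le\frac{1-q-(t-1)q'}{1-q+\varepsilon}\le 1-\frac{t\varepsilon}{1-q+\varepsilon},
\]
the last step using $q'>\varepsilon$ and $t>1$. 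Writing $p^{*}:=\frac{t\varepsilon}{1-q+\varepsilon}=\Omega(\varepsilon)$, we obtain $\Pr[\,\mathcal S'\text{ covers at least }(q-\varepsilon)P\,]\ge p^{*}$.

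To combine the two events, apply Markov to the cost with threshold $C\cdot opt_{cp}$ for $C:=2\mu/(p^{*}\,opt_{cp})$, so that $\Pr[\rho_0(\mathcal S')>C\,opt_{cp}]\le p^{*}/2$; note $C=O(b/\varepsilon)$, which is $O(b)$ for constant $q$ and $\varepsilon$. By a union bound, a single run produces an $\mathcal S'$ that covers at least $(q-\varepsilon)P$ and has cost $\le C\,opt_{cp}$ with probability at least $p^{*}/2$. Running Algorithm~\ref{alg1} independently $N=\Theta\bigl(\frac{1}{p^{*}}\log\frac{1}{\delta}\bigr)=\Theta\bigl(\frac{1}{\varepsilon}\log\frac{1}{\delta}\bigr)$ times and returning the cheapest feasible output, the probability that no run is simultaneously feasible and within the cost bound is at most $(1-p^{*}/2)^{N}\le\delta$; hence, with probability $\ge1-\delta$, the returned solution covers at least $(q-\varepsilon)P$ and has cost $\le C\,opt_{cp}=O(b)\,opt_{cp}\le O(b)\,opt$, which is the claim (the Line-8 branch is subsumed, being always feasible and within the bound).

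The step I expect to be the main obstacle is passing from the expectation guarantees of Lemmas~\ref{lem1}--\ref{lem3} to a high-probability statement: the profit covered by $\mathcal S_2$ need not concentrate, because the indicators $\hat y_e$ are strongly correlated across $e$ (all governed by the same $O(b)$ random thresholds), so the best single-run feasibility probability obtainable this way, $p^{*}$, can be as small as $\Theta(\varepsilon)$ in the worst case where $q'$ barely exceeds $\varepsilon$. The two quantitative devices that make it work are (i) inflating the cost threshold by the factor $1/p^{*}=O(1/\varepsilon)$ so that the cost-failure probability is dominated by the feasibility-failure probability and the union bound survives, and (ii) independent repetition to amplify the $\Theta(\varepsilon)$ success probability to high probability. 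A secondary thing to get right is the deficit-form Markov estimate, for which Remark~\ref{rem1}'s inequality $tq'P<p(E')$ is exactly what guarantees the target $(q'-\varepsilon)P$ lies below $p(E')$, together with the separate deterministic handling of the early-exit branch.
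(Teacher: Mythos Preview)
Your proposal is correct and follows essentially the same approach as the paper: bound the cost via Markov's inequality using Lemma~\ref{lem2}, bound the coverage deficit $p(E')-p(\mathcal S_2)$ via Markov using Lemma~\ref{lem3}, combine by a union bound, and amplify by independent repetition. The paper chooses slightly different thresholds---targeting coverage $q'P$ rather than $(q'-\varepsilon)P$ and calibrating the cost threshold $bsl\ln\frac{s}{s-t}\,opt_{cp}$ with $l=\frac{1-q}{(t-1)\varepsilon}$ so that the two failure probabilities sum to just below~$1$, then repeating $O(\ln n)$ times---but the structure and the resulting $O(b/\varepsilon)$ ratio are the same.
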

\begin{proof}
Notice that for the above $s$ and $t$, we have $1<\frac{1}{\sqrt[3]{q}}<t=\frac{1}{\sqrt{q}}<s=\frac{1}{q}.$

The outline of the proof is as follows: we first show that the sum of the probabilities for the following two events is a constant strictly smaller than 1; then a feasible solution with desired performance ratio can be achieved with high probability by repeating Algorithm \ref{algo1}  $O(\ln(n))$ times. The two events are:
\begin{itemize}
\item[$(\romannumeral1)$] $ \rho_0(\mathcal S_2) > bsl\ln(\frac{s}{s-t})opt_{cp}$, where $l=\frac{1-q}{(t-1)\varepsilon}$; 
\item[$(\romannumeral2)$] $p(\mathcal S_2) < q'P$.
\end{itemize}

For event $(\romannumeral1)$, using Markov inequality and Lemma \ref{lem2}, we have
\begin{equation}\label{eq12-4-1}
Pr\left[\rho_0(\mathcal S_2) > bsl\ln\left(\frac{s}{s-t}\right)opt_{cp}\right]\leq \frac{1}{l}=\frac{(t-1)\varepsilon}{1-q}=1-\frac{(1-q)+(1-t)\varepsilon}{1-q}.
\end{equation}

For event $(\romannumeral2)$, since $q'>\frac{qP-(q-\varepsilon)P}{P}=\varepsilon$ by Algorithm \ref{alg1}, $q'P+(1-q)P-p(\mathcal S_2)\geq 0$ by Remark \ref{rem1}, and $E[q'P+(1-q)P- p(\mathcal S_2)]\leq q'P+(1-q)P-tq'P$ by Lemma \ref{lem3}, using Markov inequality,
\begin{align}\label{eq12-4-2}
  Pr\left[p(\mathcal S_2)\leq q'P\right] & =Pr \left[q'P+(1-q)P-p(\mathcal S_2)\geq (1-q)P\right] \\
  & \leq \frac{q'P+(1-q)P-tq'P}{(1-q)P}<\frac{(1-q)+(1-t)\varepsilon}{1-q}\nonumber.
\end{align}

Adding inequalities \eqref{eq12-4-1} and \eqref{eq12-4-2}, the probability that either event $(\romannumeral1)$ occurs or event $(\romannumeral2)$ occurs is upper bounded by a constant which is strictly smaller that $1$. Hence, by repeating Algorithm \ref{algo1} $O(\ln(n))$ times, with a high probability, $p(\mathcal S_2)\geq q'P$ and $\rho_0(\mathcal S_2)\leq  bsl\ln(\frac{s}{s-t})opt_{cp}$. Combining these with Lemma \ref{lem1}, with high probability, $p(\mathcal S')=p(\mathcal S_1)+p(\mathcal S_2)\geq qP$ and
$$
   \rho_0(\mathcal S') \leq \rho_0(\mathcal S_1)+\rho_0(\mathcal S_2)\leq bs\left(1 +l\ln\left(\frac{s}{s-t}\right)\right)opt_{cp} = O\left(\frac{b}{\varepsilon}\right)opt_{cp},
$$
where the firs inequality uses \eqref{eq12-28-1} and the constant in big O is$\frac{1}{\sqrt{q}-q}\left((\frac{1}{\sqrt{q}-1})\varepsilon+(1-q)\ln(\frac{1}{1-\sqrt{q}})\right)$. The theorem is proved.
\end{proof}

\section{Approximation Algorithm for SCPSC}

As a corollary of Theorem \ref{th2}, the minimum submodular cost partial set cover problem (SCPSC for short, in which the covering requirement for each element is one) admits a bicriteria randomized $(\varepsilon,O(\frac{f}{\varepsilon}))$-approximation. In the following, we show that an adaptation of our method can yield the same approximation for SCPSC even if the submodular function $\rho_0$ is non-monotone. The idea behind the adaptation is that in this case, a natural constraint is sufficient (we do not need to use the more complicated $r_e$-covers), and thus a technique similar to that in \cite{Iwata} dealing with non-monotone submodular functions can be used.

The SCPSC problem can be modelled as the following integer program:
\begin{align*}
\min\ & \rho_0(x)  \nonumber\\
 s.t.\    \sum_{e:\ e\in E}p_ey_e  & \geq qP, \nonumber\\
 \sum_{S:\ S\in \mathcal S}x_S  & \geq y_e, \ \mbox{for any}\ e\in E\\
   x_S \in  \{0,1\} & \ \ \mbox{for}\ S\in\mathcal S\nonumber\\
   y_e \in  \{0,1\} & \ \ \mbox{for}\ e\in E\nonumber,
\end{align*}
Its relaxation is a convex program:
\begin{align}\label{eq1-16-1}
  \min\ & \hat{\rho_0}(x)  \nonumber\\
 s.t.\    \sum_{e:\ e\in E}p_ey_e  & \geq qP, \nonumber\\
 \sum_{S:\ S\in \mathcal S}x_S  & \geq y_e, \ \mbox{for any}\ e\in E\\
   x_S& \geq 0\ \mbox{for}\ S\in\mathcal S\nonumber\\
  1\geq y_e& \geq 0\ \mbox{for}\ e\in E\nonumber
\end{align}
Notice that since we can use $\rho_0$ as objective function here, the convexity follows directly from the submodularity of $\rho_0$. While for program \eqref{eq11-22-3}, its convexity is guaranteed by Lemma \ref{lem1-15-1}, which is no longer true if $\rho_0$ is non-monotone (see Remark \ref{rem1-15-2}).

Define a new function $\gamma$ by $\gamma(\mathcal S')=\min\{\rho_0(\mathcal S'')\colon \mathcal S'\subseteq \mathcal S''\subseteq \mathcal S\}$. Then $\gamma$ is a nonnegative monotone nondecreasing submodular function (see \cite{Iwata}). For any sub-collection $\mathcal S'\subseteq \mathcal S$, the value of $\gamma(\mathcal S')$ can be determined in polynomial time by an algorithm for submodualr function minimization. Let $\mathcal S_0'$ be the minimizer, that is, $\mathcal S'\subseteq \mathcal S_0'\subseteq \mathcal S$ and $\rho(\mathcal S'_0)=\gamma(\mathcal S')$. It should be noticed that $\mathcal S_0'$ can fully cover all those elements which are fully covered by $\mathcal S'$ (since $\mathcal S'\subseteq \mathcal S'_0$).

Our algorithm for SCPSC is similar to Algorithm \ref{alg1} with the following two differences. First, replace convex program \eqref{eq11-22-3} by \eqref{eq1-16-1}. Second, having obtained $\mathcal S_1$, compute $(\mathcal S_1)_0$ and replace $\mathcal S_1$ by $(\mathcal S_1)_0$ in the remaining part of Algorithm \ref{alg1}.

Notice that in the analysis, monotonicity is used only in Lemma \ref{lem1}. So, to obtain the desired result, we only need to prove the following lemma.

\begin{lemma}\label{lem4}
$\rho((\mathcal S_1)_0)\leq bs\cdot opt_{cp}$.
\end{lemma}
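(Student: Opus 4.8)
The plan is to bypass the non\hyph monotonicity of $\rho_0$ by re\hyph running the argument of Lemma~\ref{lem1} for the auxiliary \emph{monotone} function $\gamma$, and then transferring the bound back to $\rho_0$ at the very end. By the definition of $(\mathcal S_1)_0$ as a minimizer of $\rho_0$ over supersets of $\mathcal S_1$, we have $\rho_0((\mathcal S_1)_0)=\gamma(\mathcal S_1)$; and since the Lov\'asz extension $\hat\gamma$ of $\gamma$ agrees with $\gamma$ on $\{0,1\}$\hyph vectors, $\gamma(\mathcal S_1)=\hat\gamma(\hat x)$, where $\hat x\in\{0,1\}^{|\mathcal S|}$ is the integral vector produced by the deterministic phase (so $\mathcal S_1=\{S\colon\hat x_S=1\}$). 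Hence it suffices to show $\hat\gamma(\hat x)\le bs\cdot opt_{cp}$, and for this we may use freely that $\gamma$ is nonnegative, monotone nondecreasing and submodular, together with Theorem~\ref{def1.4} and the positive homogeneity of the Lov\'asz extension applied to $\gamma$.

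First I would introduce, exactly as in Lemma~\ref{lem1}, the vectors $z_S=\min\{1,bsx^*_S\}$ and $z'_S=\min\{1/(bs),x^*_S\}$, so that $z=bs\,z'$, and record the coordinatewise inequalities $\hat x\le z$ (if $\hat x_S=1$ then Line~4 of the algorithm forces $x^*_S\ge 1/(bs)$, whence $z_S=1$) and $z'\le x^*$, all four vectors lying in $[0,1]^{|\mathcal S|}$. For any coordinatewise inequality $w\le w'$ in $[0,1]^{|\mathcal S|}$ one has $w^\theta\le (w')^\theta$ for every $\theta$, hence $\gamma(w^\theta)\le\gamma((w')^\theta)$ by monotonicity of $\gamma$, hence $\hat\gamma(w)\le\hat\gamma(w')$ by integrating over $\theta$ via Theorem~\ref{def1.4}. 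Applying this twice and using $\hat\gamma(bs\,z')=bs\,\hat\gamma(z')$, we obtain
\[
\hat\gamma(\hat x)\ \le\ \hat\gamma(z)\ =\ bs\,\hat\gamma(z')\ \le\ bs\,\hat\gamma(x^*).
\]

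It remains to compare $\hat\gamma(x^*)$ with $opt_{cp}$, the optimum of the convex program \eqref{eq1-16-1} whose objective is $\hat\rho_0$. Taking $\mathcal S''=\mathcal S'$ in the minimum defining $\gamma$ shows $\gamma(\mathcal S')\le\rho_0(\mathcal S')$ for every $\mathcal S'\subseteq\mathcal S$, i.e.\ $\gamma\le\rho_0$ pointwise on $\{0,1\}^{|\mathcal S|}$; by the expectation form of Theorem~\ref{def1.4} this lifts to $\hat\gamma(x)=\int_0^1\gamma(x^\theta)\,d\theta\le\int_0^1\rho_0(x^\theta)\,d\theta=\hat\rho_0(x)$ for every $x\in[0,1]^{|\mathcal S|}$, so in particular $\hat\gamma(x^*)\le\hat\rho_0(x^*)=opt_{cp}$. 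Chaining the displayed inequalities then gives $\rho_0((\mathcal S_1)_0)=\hat\gamma(\hat x)\le bs\cdot opt_{cp}$, as claimed.

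The only step that is not a direct transcription of the proof of Lemma~\ref{lem1} is the last one, where one must pass from the objective $\hat\rho_0$ actually optimized in \eqref{eq1-16-1} to the objective $\hat\gamma$ used in the rounding analysis without losing anything; here the integral (expectation) representation of the Lov\'asz extension in Theorem~\ref{def1.4} is essential, since the pointwise domination $\gamma\le\rho_0$ does not obviously propagate to the extensions through the sorting\hyph based formula of Definition~\ref{def1.3} (the terms $\rho(E_i)$ there need not be comparable). I expect this to be the only mild obstacle; the remainder of the proof is mechanical.
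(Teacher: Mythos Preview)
Your proof is correct and follows essentially the same route as the paper: pass to the monotone submodular minorant $\gamma$, bound $\hat\gamma(\hat x)\le bs\,\hat\gamma(x^*)$ via monotonicity and positive homogeneity of the Lov\'asz extension, and then use the pointwise inequality $\gamma\le\rho_0$ to obtain $\hat\gamma(x^*)\le\hat\rho_0(x^*)=opt_{cp}$. One small correction to your closing remark: the domination $\gamma\le\rho_0$ \emph{does} propagate directly through the sorting formula of Definition~\ref{def1.3}, because for $x\ge 0$ the coefficients $x_{j_i}-x_{j_{i+1}}$ and $x_{j_n}$ are all nonnegative and the chain $E_1\subseteq\cdots\subseteq E_n$ is the same for both functions; the paper in fact argues the comparison via Definition~\ref{def1.3} rather than Theorem~\ref{def1.4}.
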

\begin{proof}
Let $\hat x$ be the indicator vector of $\mathcal S_1$. By the monotonicity of $\gamma$, the Lov\'asz extension $\hat{\gamma}$ is also monotone nondecreasing. Hence it follows from $\hat x\leq bsx^*$ that
\begin{equation}\label{eq1-16-2}
\hat \gamma(\hat x)\leq\hat \gamma(bsx^*).
\end{equation}
For any sub-collection $\mathcal S'\subseteq \mathcal S$, by Definition \ref{def1.3}, $\gamma(\mathcal S')\leq \rho(\mathcal S')$. By the definition of Lov\'asz extension in Definition \ref{def1.3}, we have
\begin{equation}\label{eq1-16-3}
\hat \gamma(x)\leq \hat \rho(x) \ \mbox{holds for any vector}\ x\in [0,1]^{|\mathcal S|}.
\end{equation}
Combining \eqref{eq1-16-2}, \eqref{eq1-16-3} with the positive homogeneous property of Lov\'{a}sz extension,
$$\rho((\mathcal S_1)_0)=\gamma(\mathcal S_1)=\hat {\gamma}(\hat x)\leq \hat {\gamma}(bsx^*)=bs\hat {\gamma}(x^*)\leq bs\hat {\rho}(x^*)= bs\cdot opt_{cp}.$$
The lemma is proved.
\end{proof}

From the above argument, we have the following result.

\begin{Theorem}\label{th3}
For any nonnegative submodular function, the $SCPSC$ problem has a bicriteria randomized $(\varepsilon,O(\frac{f}{\varepsilon}))$-approximation with high probability.
\end{Theorem}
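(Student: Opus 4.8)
The plan is to mirror the proof of Theorem~\ref{th2}, tracking exactly where monotonicity of the cost function was used and replacing that step by Lemma~\ref{lem4}. First I would set $r_e\equiv 1$, so that $\Omega_e=\{\{S\}\colon e\in S\}$, hence $|\Omega_e|\leq f$ and $b=\max_e\binom{f}{r_e}=f$; thus the target ratio $O(b/\varepsilon)$ becomes $O(f/\varepsilon)$. I would then run the adapted algorithm described above: solve the relaxation \eqref{eq1-16-1} (whose convexity is immediate from submodularity of $\rho_0$, with no appeal to Lemma~\ref{lem1-15-1}), perform the deterministic rounding to obtain $\mathcal S_1$, pass to $(\mathcal S_1)_0$ via the minimization defining $\gamma$, and then perform the same randomized rounding phase on the remaining fractional variables.

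Next I would re-examine the three structural lemmas. Lemma~\ref{lem1} is the only place monotonicity of $\rho_0$ entered, and it is now superseded by Lemma~\ref{lem4}, which gives $\rho((\mathcal S_1)_0)\leq bs\cdot opt_{cp}=fs\cdot opt_{cp}$; moreover $(\mathcal S_1)_0\supseteq\mathcal S_1$ still fully covers every element with $y^*_e\geq 1/s$, so the covering guarantee of Lemma~\ref{lem1} is preserved. Lemma~\ref{lem2} (expected cost of $\mathcal S_2$) and Lemma~\ref{lem3} (expected profit of $\mathcal S_2$) used only submodularity/nonnegativity of $\rho_0$ and the Lov\'asz-extension identity of Theorem~\ref{def1.4}, together with the combinatorial constraint $\max_{\mathcal A\in\Omega_e}x^*_{\mathcal A}\geq y^*_e/b$; none of these needs monotonicity, so both go through verbatim with $b=f$. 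In particular $\mathbb E[\rho_0(\mathcal S_2)]\leq fs\ln(\tfrac{s}{s-t})\,opt_{cp}$ and $\mathbb E[p(\mathcal S_2)]\geq tq'P$.

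With these pieces in hand, the concentration argument of Theorem~\ref{th2} applies unchanged: setting $s=1/q$, $t=1/\sqrt q$, Markov's inequality bounds the failure probabilities of events $(\romannumeral1)$ and $(\romannumeral2)$ by a constant strictly below $1$, so $O(\ln n)$ independent repetitions give, with high probability, a sub-collection $\mathcal S'=(\mathcal S_1)_0\cup\mathcal S_2$ with $p(\mathcal S')\geq qP$. For the cost bound I would invoke subadditivity \eqref{eq12-28-1} of the nonnegative submodular $\rho_0$ to write $\rho_0(\mathcal S')\leq\rho_0((\mathcal S_1)_0)+\rho_0(\mathcal S_2)$, and then combine Lemma~\ref{lem4} with the high-probability bound on $\rho_0(\mathcal S_2)$ to get $\rho_0(\mathcal S')\leq fs(1+l\ln(\tfrac{s}{s-t}))\,opt_{cp}=O(f/\varepsilon)\,opt_{cp}$; since $opt_{cp}\leq opt$, this is an $O(f/\varepsilon)$-approximation. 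The one point requiring care — and the place I expect to spend the most attention — is justifying that the objective swap is legitimate without monotonicity: specifically that $\gamma(\mathcal S')=\min\{\rho_0(\mathcal S'')\colon\mathcal S'\subseteq\mathcal S''\subseteq\mathcal S\}$ is nonnegative, monotone, submodular and polynomially computable (cited from \cite{Iwata}), that $(\mathcal S_1)_0$ can be recovered as an actual minimizer, and that $\gamma\leq\rho$ pointwise so that $\hat\gamma\leq\hat\rho$ — all of which is exactly what Lemma~\ref{lem4} packages, so the remaining write-up is a matter of assembling these observations rather than new work.
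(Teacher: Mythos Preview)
Your proposal is correct and follows essentially the same route as the paper: the paper's Section~4 likewise notes that monotonicity enters only through Lemma~\ref{lem1}, replaces it by Lemma~\ref{lem4} via the auxiliary function $\gamma$ and the minimizer $(\mathcal S_1)_0$, and then invokes Lemmas~\ref{lem2}, \ref{lem3} and the argument of Theorem~\ref{th2} verbatim with $b=f$. Your write-up is in fact more explicit than the paper's, which simply states ``From the above argument, we have the following result.''
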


%

\section{Conclusion}

By introducing a novel convex program describing the minimum submodular cost partial multi-cover problem (SCPMC), we give a randomized $(\varepsilon,O(\frac{b}{\varepsilon}))$-approximation algorithm for SCPMC, where $b=\max_e\binom{f}{r_{e}}$. Since PMC is a special case of SCPMC, the PMC problem also has a bicriteria randomized $(\varepsilon,O(\frac{b}{\varepsilon}))$-approximation algorithm with a high probability. We show that in the case when the covering requirement for each element is one, monotonicity requirement can be dropped off from the cost function. It should be noticed that if we only care about an expected result, then we may obtain a randomized algorithm producing a sub-collection $\mathcal S'$ with $\mathbb E[\rho_0(\mathcal S')]\leq bs(1+\ln\frac{s}{s-t})opt$ and $\mathbb E[p(\mathcal S')]\geq qP$. This can be achieved by modifying $(q-\varepsilon)P$ in Line 8 of Algorithm \ref{algo1} into $qP$.

One question is can one obtain the same result for SCPMC without monotonicity requirement? Another question is what if $r_{\max}$ is not upper bounded by a constant?

\section*{Acknowledgements}
This research is supported by NSFC (11531011, 61222201).

\end{document}